\newtheorem{thm}{Theorem}
\newtheorem{lm}{Lemma}
\newtheorem{remark}{Remark}
\newtheorem{df}{Definition}
\newtheorem{cond}{Assumption}
\newtheorem{example}{Example}
\begin{document}


\title{Statistical Robust Chinese Remainder Theorem for Multiple Numbers}

\author{Hanshen Xiao, Nan Du, Zhikang T. Wang and Guoqiang Xiao

\thanks{Hanshen Xiao is with CSAIL and the EECS Department, MIT, Cambridge, USA. E-mail: hsxiao@mit.edu.}
\thanks{Nan Du is with Department of Statistics, Harvard University, Cambridge, USA. E-mail: nandu@mit.edu}
\thanks{Zhikang T. Wang is with the Department of Physics, University of Tokyo, 7-3-1 Hongo, Bunkyo-ku, Tokyo, Japan. E-mail: wang@cat.phys.s.u-tokyo.ac.jp }
\thanks{Guoqiang Xiao is with the College of Computer and Information Science, Southwest University, Chongqing, China. E-mail: gqxiao@swu.edu.cn}}




\let\lc\langle
\let\rc\rangle

\date{}
\markboth{IEEE Transactions on Signal Processing}%
{Shell \MakeLowercase{\textit{et al.}}: Bare Demo of IEEEtran.cls for IEEE Journals}
\date{}

\maketitle

\begin{abstract}
Generalized Chinese Remainder Theorem (CRT) is a well-known approach to solve ambiguity resolution related problems. In this paper, we study the robust CRT reconstruction for multiple numbers from a view of statistics. To the best of our knowledge, it is the first rigorous analysis on the underlying statistical model of CRT-based multiple parameter estimation. To address the problem, two novel approaches are established. One is to directly calculate a conditional maximum a posteriori probability (MAP) estimation of the residue clustering, and the other is based on a generalized wrapped Gaussian mixture model to iteratively search for MAP of both estimands and clustering. Residue error correcting codes are introduced to improve the robustness further. Experimental results show that the statistical schemes achieve much stronger robustness compared to state-of-the-art deterministic schemes, especially in heavy-noise scenarios.
\end{abstract}

\begin{IEEEkeywords}
Chinese Remainder Theorem (CRT), Ambiguity Resolution, Generalized Gaussian Mixture Model, Maximum a posteriori probability (MAP),
\end{IEEEkeywords}

\begin{section}{Introduction}

\noindent Pioneered by Xia's remarkable works \cite{xia1999}, \cite{dynamic-1}, there is a rich line of works to advance the understanding of number theory based sparse sensing. Due to physical limitation, estimations with integer ambiguity solution are frequently encountered in many practical scenarios. Such cases include frequency determination of undersampled waveforms \cite{li2009fast,TSP2019,closed,multistage,towards}, phase unwrapping \cite{xia2007phase,li2008fast} etc., which can be modeled by solving some Diophantine equations. Therefore, algebraic approaches can be applied as an alternative way to those classic problems. Representatively, Chinese Remainder Theorem (CRT) based reconstruction and co-prime or nested based sampling/arrays \cite{coprime2010, coprime2014, nested2010} are two successful examples. In particular, due to the nature of the distributed representation of a number with its residues, CRT based reconstructions have been further used in applications such as localization estimation in wireless networks \cite{mle2,wang2011robust}, detection of moving targets using multi-frequency antenna array Synthetic Aperture Radar (SAR) \cite{SAR1,SAR2,SAR3,SAR4}, which can be even executed distributively. Generally speaking, the underlying problem can be described as follows.

 \textbf{Problem of interests:} Consider a set of $N$ numbers, $\bm{Y}_{[1:N]}=\{Y_1, Y_2, ... ,Y_N\}$, and $L$ fixed moduli $\bm{m}_{[1:L]}=\{m_1, m_2, ... , m_L\}$, which are all assumed to be integers temporarily. For each $m_l$, one may observe an unordered set $\bm{R}_{[1:N],l} = \{R_{1l}, R_{2l}, ... ,R_{Nl}\}$, where $R_{il} = \langle Y_i + \Delta_{il} \rangle_{m_l}$, i.e., the residue of $Y_i$ modulo $m_l$ is perturbed by a noise $\Delta_{il}$. Here, $\langle A \rangle_B$ denotes the residue of $A$ modulo $B$ and $\Delta_{il}$ are assumed to be independently and identically distributed (i.i.d.) Gaussian noises for each $i$. $\bm{R}_{[1:N],l}$ are assumed to be unordered, which implies that the correspondences between the elements $R_{il}$ in $\bm{R}_{[1:N],l}$ and $\bm{Y}_{[1:N]}$ are unknown. The ultimate goal is to robustly reconstruct $\bm{Y}_{[1:N]}$ using $\bm{R}_{[1:N],l}$, $l=1,2,...,L$.

The model described above captures a large class of problems. Suppose a sinusoidal signal of multiple frequencies $\bm{Y}_{[1:N]}$ is undersampled with multiple rates, $\bm{m}_{[1:L]}$, and Fourier transform is conducted on the $L$ sample sequences to get frequency spectrums, respectively. From the locations of peaks in the spectrums, which may be perturbed with noise, one can estimate the residues of frequency $Y_i$ modulo $m_l$. However, the correspondence relationship between the peaks, represented by $R_{il}$, in the spectrum and $\bm{Y}_{[1:N]}$ are unknown due to the modulo operation, where the disambiguation is shown to be a nontrivial problem \cite{error-1,RCRT-two,TSP2018}. The distance estimation via multi-frequency phase measurement can also be the case as above, where $Y_i$ stands for the distance while $m_l$ represents the carrier wavelength \cite{mle2,wang2011robust}.

 \textbf{Prior Art}: On the whole, the underlying challenges are twofold: the correspondence ambiguity and perturbation. Though there have been limited interactions cutting across both, each subproblem has been well studied separately.

For the single number case, i.e., $N=1$, the number reconstruction is usually called robust CRT (RCRT), where residues are perturbed with errors. Error control of Hamming-weighted errors in residue codes dated back to 1960s \cite{rns1963} and the first polynomial time decoding scheme was proposed in \cite{goldreich1999}. Nonetheless, in our case, small errors may occur across all observations, $\{R_{il}\}$, and we are more interested in errors bounded with infinity norm. To this end, the first closed-form RCRT for errors with a bounded magnitude was proposed in \cite{closed}. Generalized versions can be found in \cite{multistage}, \cite{xu2014solving}. By deploying non-co-prime moduli, where $m_l = \Gamma M_l$ such that $\{M_l, l=1,...,L\}$ are pairwise co-prime and $\Gamma$ can be a real number, it is demonstrated that when $|\Delta_{il}| < \frac{\Gamma}{4}$ for each $i$ and $l$, the reconstruction error is upper bounded by $\frac{\Gamma}{4}$ as well \cite{closed}. The proof has been shown in \cite{sp2017}, where such bound and modulus selections are optimal.

For multiple numbers with errorless residues, the reconstruction is previously termed generalized CRT (GCRT). The main focus is the largest dynamic range $D$ for $\bm{Y}_{[1:N]}$ such that for arbitrary $\bm{Y}_{[1:N]} \in [0, D)^N$, they can be uniquely determined from their unordered residues modulo $\bm{m}_{[1:L]}$.   The first generic lower bound of $D$ was given in \cite{dynamic-1}, and further sharpened by \cite{dynmaic-sharpen}. In particular, when the residues of $\bm{Y}_{[1:N]}$ modulo $m_l$ are distinct for $l \in \{1,2,...,L\}$, polynomial time GCRT exists \cite{dynamic-2}. So far, the closed form of $D$ is only known when $N=2$ \cite{dynmaic-two}.

To tackle correspondence ambiguity and perturbation simultaneously, generalized robust CRT(GRCRT) has been developed in \cite{RCRT-two} for $N=2$ and later generalized to arbitrary $N$ \cite{TSP2018}. Analogously, under the same setup, it is found when $\Gamma \prod_{l=1}^L M_l = O(D^N)$ and $\max_{i,l} |\Delta_{il}|<\frac{\Gamma}{4N}$, $\bm{Y}_{[1:N]}$ can be uniquely and robustly reconstructed with deviation bounded by $\frac{\Gamma}{4N}$ \cite{TSP2018}.

 \textbf{Motivation:}  CRT suggests that, given a set of moduli $\bm{m}_{[1:L]}$, there is a bijection between a non-negative number $X$ and its residues modulo $\bm{m}_{[1:L]}$ when $X$ is less than the least common multiple (lcm) of $\bm{m}_{[1:L]}$. Said another way, the lcm of $\bm{m}_{[1:L]}$ is the maximal utilization of moduli for number reconstruction. However, when $N \geq 2$, the severe limitation of prior works mainly arises from the large redundancy required to disambiguate residues and tolerate perturbation. When the moduli are in a form $m_l = \Gamma M_l$, the robust deterministic reconstruction relies on the assumption that all $|\Delta_{il}|$ should be bounded by $\frac{\Gamma}{4N}$. On the other hand, either in \cite{TSP2018} or \cite{dynmaic-sharpen}, it trades off the utilization rate of moduli by shrinking the dynamic range $D$ to $O(lcm(m_{[1:L]})^{1/N})$, to uniquely determine the correspondences between $\bm{R}_{[1:N],l}$ and $\bm{Y}_{[1:N]}$, where the number of moduli $L$ is proportional to the number of estimands $N$. Consequently, as $N$ increases, which incurs a larger $L$, $\Gamma$ has to be sharply enlarged as well to meet the error bound in considerable probability. Apparently, it is a paradox that existing schemes behave even worse with more samples obtained.

Clearly, the unknown correspondences between residues and reconstructed numbers are the essential bottleneck. When the correspondences are known, the reconstruction of $N$ numbers is simplified to apply RCRT for a single number $N$ times. Indeed, as explained in \cite{closed}, RCRT matches the maximal moduli utilization rate. Therefore, it is natural to ask whether GRCRT can also achieve such maximal moduli utilization rate. In this paper, resorting to statistics, we answer this question affirmatively.

 \textbf{Contribution and Organization:} To the best of our knowledge, the underlying statistical model of GRCRT has not been systematically studied. The most closely related work is the maximum likelihood estimation (MLE) based RCRT explored in \cite{mle1}, \cite{mle2} for a single number. In this paper,

\begin{enumerate}
	\item We show GRCRT can be described by a generalized wrapped Gaussian Mixture Model (GMM) with extra information on sampling. A systematic statistical analysis is presented.
	\item Any successful estimation depends both on reliable statistical inference and a computationally efficient implementation. We propose two efficient algorithms to address the problem. In Algorithm 1, we first derive the maximum a posteriori probability (MAP) of residue clustering under Assumption 1 in a semi-closed form and the problem is thus reduced to $N$ independent conventional RCRT. In addition, inspired by $K$-means clustering, we further propose Algorithm 2 as an iterative scheme to approximate the MAP of both reconstructed $Y_i$ and residue clustering in general.
	\item We show that the tradeoff amongst the three primary parameters, $N$, $L$ and $\Gamma$, can be further improved by incorporating error correcting codes against outliers. Thorough simulation results show that the statistical schemes significantly improve the performance compared with deterministic methods, especially for the high-noise case. For the extremely low-noise case, the deterministic methods may outperform the proposed methods, which is consistent with the theoretical analysis.
\end{enumerate}

The rest of the paper is organized as follows. In Part II, the background and methology of the proposed schemes are presented. In part III, the MAP of residue clustering is analyzed and we prove the optimal solution can be expressed in a semi-closed form under Assumption \ref{cond}. Part IV develops a framework of generalized GMM and an expectation maximization (EM) based scheme is proposed to approximate the MAP of both clustering and estimands. In Part V, the simulation results of the performance comparison and parameter tradeoff are presented. Residue codes are further introduced to tolerate clustering errors. We conclude and provide future prospects in Part VI.

\end{section}

\section{Background \& Methodology}
\noindent First of all, we specify the problem formally. Given moduli $\bm{m}_{[1:L]} = \{m_l = \Gamma M_l | l=1,...,L\}$, where $M_l$ are pairwise co-prime, there are $N$ numbers, denoted by $\bm{Y}_{[1:N]}= \{Y_i | i=1,...,N\}$, to be reconstructed. For the $l^{th}$ sampler with a modulus $m_l$ as sampling rate, an unordered sample set, $\{R_{il},i=1,2,...,N\}$, is obtained, where $R_{il}$ is the residue of $Y_i$ interfered with noise $\Delta_{il}$ modulo $m_l$, i.e., $R_{il} = \langle Y_i + \Delta_{il} \rangle_{m_l}$. Here, $\Delta_{il}$, $i=1,2,...,N$, is i.i.d. Gaussian noise following $\mathcal{N}(0,\sigma_l)$. We define $\bm{R}_{[1:N],l} = (R_{1l}, R_{2l}, ... ,R_{Nl})$ for short. Furthermore, we assume $\bm{Y}_{[1:N]}$ are independently and uniformly distributed in $[0,D)$, where $D=\Gamma \prod_{l=1}^L M_l$, i.e., the lcm of $\bm{m}_{[1:L]}$, termed as the dynamic range. In the following, let $\hat{\bm{Y}}_{[1:N]}$ denote the estimations of $\bm{Y}_{[1:N]}$.

\begin{remark}
	It is noted that the definition of erroneous residues in our paper is a generalization of Wang and Xia's prior works \cite{closed}, which assumes $R_{il} = \langle Y_i \rangle_{m_l} + \Delta_{il}$. Such definition ignores the cases when $ \langle Y_i \rangle_{m_l} + \Delta_{il}<0$ or $ \langle Y_i \rangle_{m_l} + \Delta_{il}\geq m_{l}$.
\end{remark}

\textbf{Robustness:} We first flesh out how existing works achieve robustness. Different from the binary systems, the residue number systems are very sensitive to residue errors, where a small error occurring in one residue may cause a large deviation in reconstruction by trivially applying CRT. It is mainly resulted from the non-weighted nature of residue representation.


Therefore, the elegant idea applied in Wang's work \cite{closed} is to recover the quotient of $Y_i$ divided by $\Gamma$, i.e., $\lfloor \frac{Y_i}{\Gamma} \rfloor.$ Indeed, once $\lfloor \frac{Y_i}{\Gamma} \rfloor$ is correctly reconstructed, we can escape the restrain of modulo operations and the rest things are trivial to estimate $\langle Y_i \rangle_{\Gamma}$. Geometric explanations can be found in \cite{TSP2018} \cite{towards}. The follow-up works on GRCRT also follow the same idea. The implementation of GRCRT \cite{RCRT-two} \cite{TSP2018} can be simply concluded as two steps:
\begin{enumerate}
     \item First, convert GRCRT to GCRT by constructing new residue set $\hat{\bm{R}}_{l} = \{ \langle\lfloor \frac{Y_i}{\Gamma} \rfloor\rangle_{M_l} , i=1,2,...,N \}$;

     \item Apply GCRT to find the correspondence relationship of elements between $\hat{\bm{R}}_{l}$ and $\bm{Y}_{[1:N]}$ and then reconstruct $\{Y_i\}$.
\end{enumerate}

However, in order to get $\hat{\bm{R}}_{l}$, according to \cite{RCRT-two}, \cite{TSP2018}, it is required that for each $i$, \footnote{We find that the assumptions that all $|\Delta_{il}|<\frac{\Gamma}{4N}$  in \cite{RCRT-two}, \cite{TSP2018} can be relaxed to (\ref{pre}). }
\begin{equation}
\label{pre}
\max_{l} \Delta_{il} - \min_{l} \Delta_{il}<\frac{\Gamma}{2N}=\delta.
\end{equation}
For simplicity, let us assume that the variances $\sigma^2_l$ are the same for each $l$ as $\sigma^2$ temporarily here to ease the analysis. Since the errors $\Delta_{il}$ are i.i.d. Gaussian noises, the probability that (\ref{pre}) holds is
\begin{equation}
\label{minprodensity}
\bigg (\int_{-\infty}^{\infty} p(x) (\Phi(x+2\delta)-\Phi(x))^{L-1} ~ dx \bigg )^{N}
\end{equation}
where $p$ and $\Phi$ are the probability density and cumulative distribution function of a Gaussian $\mathcal{N}(0,\sigma^2)$, respectively. Clearly, (\ref{minprodensity}) can be further upper bounded by
\begin{equation}
\label{pre-upper}
{(\Phi(\delta)-\Phi(-\delta))}^{N(L-1)}
\end{equation}
As $N$ increases, with fixed $\sigma$, (\ref{pre-upper}) decays exponentially in an order of $O(N^2)$. \footnote{$L$ is indeed linear proportional to $N$ since it is required that the value of the lcm of $\frac{L}{N}$ moduli should be bigger than $Y_i$. }

\textbf{Methology:} {In Fig. 1, it provides a more intuitive view with respect to the algebraic structure of residue representation. With modulo operation, the real axis $\mathbb{R}$ is folded and wrapped into a circle, of which the length equals to the modulus. When the moduli are in such a form $m_l = \Gamma M_l$, the following holds:
\begin{equation}
\label{proj_basic}
\mu_{i} = \langle \langle Y_i \rangle_{m_l} \rangle_{\Gamma} = \langle Y_i \rangle_{\Gamma}
\end{equation}
As a property shared by all residues of $Y_i$ modulo $\bm{m}_{[1:L]}$, we term $\mu_i$ the {\em common residue} of $Y_i$. The operation of modulo $\Gamma$ can be viewed as a {\em projection} in residue space, shown in Fig. 1. Obviously, if $\{\mu_{i}\}$ are distinct, they can be used to find the correspondences between $\bm{R}_{[1:N],l}$ and $\bm{Y}_{[1:N]}$. However, with the occurrence of errors, the strategy fails to provide correct determination. However, it inspires us to estimate the correspondences from clustering $\langle R_{il} \rangle_{\Gamma}$. This is the key idea of proposed reconstruction schemes, where we only deal with $\mu_{i}$ instead of searching across $[0,D)$.
On the other hand, CRT plays a role to aggregate the residues across $\bm{m}_{[1:L]}$ to find out the number they represent on the outer circle modulo $\Gamma\prod_{l=1}^LM_l$. The two key operations, projections to the circle modulo $\Gamma$ and CRT, which will frequently appear in the following context, are illustrated in Fig. \ref{proj}.
}

\begin{figure}
\centering
\includegraphics[width=2.2 in,height= 2.5in]{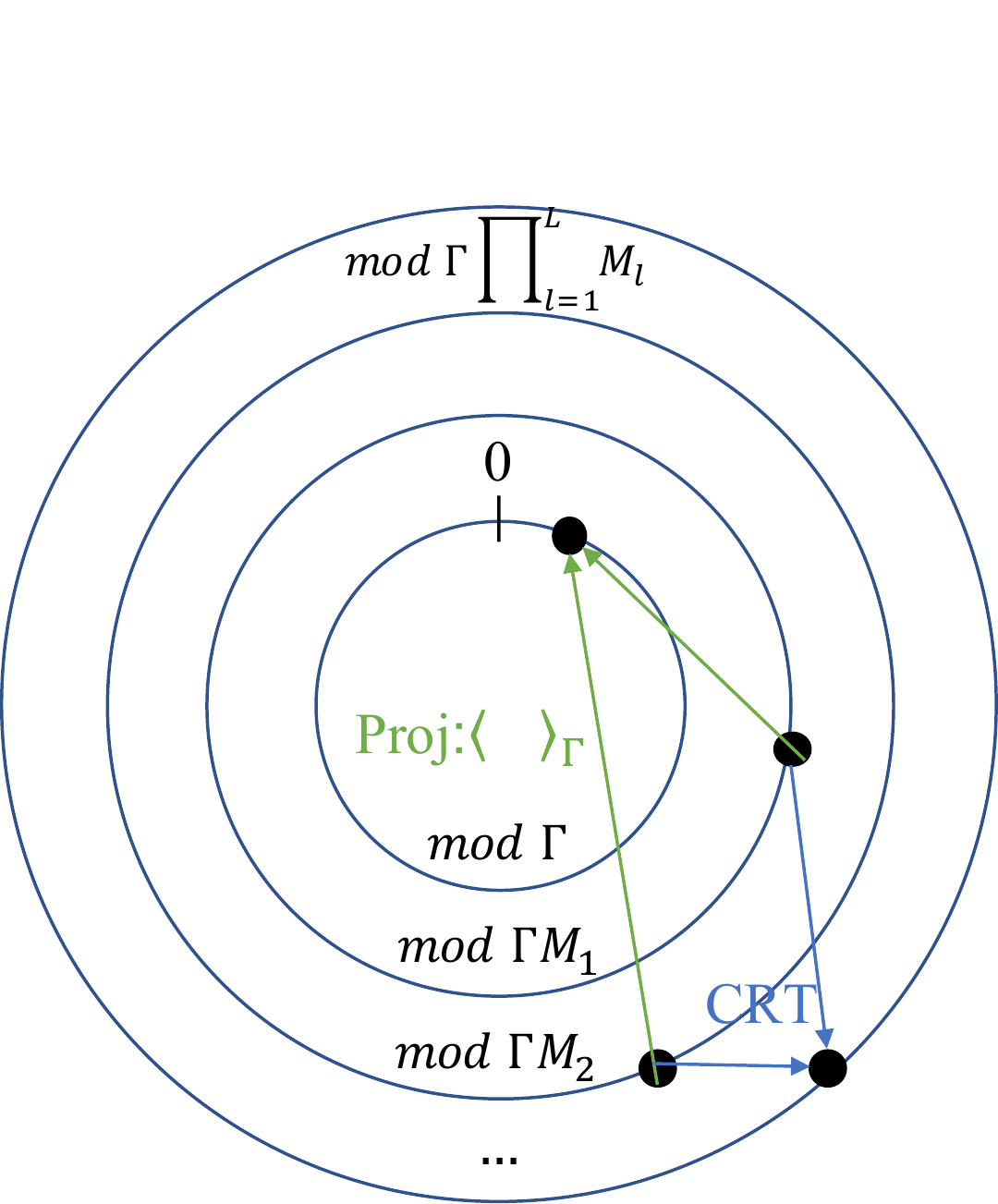}
\caption{Illustration for residue projection and CRT}
\label{proj}
\end{figure}

Throughout the rest of the paper, we will show that in order to achieve the maximal possible dynamic range, all the statistical analyses on $Y_i$ can be elegantly replaced by those of the erroneous common residues, $r_{il}= \langle R_{il} \rangle_{\Gamma}$. For the convenience of readers, all constantly used notations are listed in Table I.

\begin{table}
	\caption{List of Notations}
	\begin{tabular*}{8.8cm}{lll}
		\hline
		Notations & ~~~~~~~~~ Explanation    \\
		\hline
		$L$  & The number of samplings / moduli selected  \\
		$m_l$ & Moduli selected  \\
		$N$ & The number of real numbers to be reconstructed\\
		$Y_i$ & Real number to be reconstructed\\
		$K_l$ & Permutation variable for each sampling\\
		$R_{il}$ & Raw observations\\
		$\Delta_{il}$ & Gaussian noise in observation\\
		$\mu_i$ & Common residue, residue of $Y_i$ modulo$\Gamma$\\
		$r_{il}$& Residue of observation $R_{il}$ modulo $\Gamma$ \\
		$\hat{Y}_{i}$ & Estimation of $Y_i$ \\
		$\hat{\mu}_{i}$ & Estimation of common residue\\
	\hline
	\end{tabular*}
\end{table}

\section{Algorithm One: Maximum a Posteriori Estimation For Residue Clustering}
\noindent In this section, we will introduce our non-informative prior and describe the problem as a Bayesian statistical model. We further show that under Assumption 1, the MAP of residue clustering is in a semi-closed form and can be determined from $O(NL)$ candidates. Relying on the MAP of residue clustering, it is reduced to $N$ conventional RCRT for a single number.



For $N$ real numbers, $\textbf{Y}_{[1:N]}= \{Y_i, i=1,...,N\}$ uniformly distributed in $[0,D)$, on achieving the maximal dynamic range, $D$ is set as $D = \Gamma \times \prod_{l = 1}^{L} M_l$.  
For brevity, all noisy residues sampled with $L$ samplers are represented by $\bm{R}_{[1:L]}=(\bm{R}_{[1:N],1}, \bm{R}_{[1:N],2}, ... , \bm{R}_{[1:N],L})$. To specify the problem, we introduce $K_l, l =1, 2, .., L$, as a set of i.i.d. $N$-permutation variables, which represents the underlying correspondences between real numbers and residues. It is assumed that the permutation variable $\textbf{K}_{[1:L]}=({K}_1, {K}_2, ... ,{K}_L)$ subjects to uniform distribution.
Under a specific $\textbf{K}_{[1:L]}$, it implies that we assume $\{R_{{K_{l}(i)},l}, l = 1, 2,..., L\}$ are the residues of $Y_i$.

We decompose $Y_i$ as $Y_i = k_i \Gamma + \mu_i$, where $\mu_i := \langle Y_i\rangle_{\Gamma}$ denotes the residue of $Y_i$ modulo $\Gamma$, and $k_i$ denotes the corresponding quotient. Since $Y_i$ follows a prior of uniform distribution in $[0, D)$, $k_i$ is an integer random variable uniformly distributed within $\{0, 1, 2,..., \frac{D}{\Gamma}-1\}$, and $\mu_i$ uniformly distributed within $[0,\Gamma)$. Similarly, we decompose $R_{il}$ as $j_{il}  \Gamma +  r_{il}$, where $r_{il} := \langle R_{il} \rangle _{\Gamma}$ denotes residue of $R_{il}$ modulo $\Gamma$, and $j_{il}$ denotes the quotient accordingly. We therefore move all parameters and observations onto a 'smaller (inner) circle' (refer to Fig. 1) modulo $\Gamma$. Accordingly, we estimate $\textbf{K}_{[1:L]}$ with MAP, denoted by $\hat{\textbf{K}}_{[1:L]}$, i.e.,
\begin{equation}
\label{huahua}
\begin{aligned}
\hat{\textbf{K}}_{[1:L]}
& := \arg\max_{\textbf{K}_{[1:L]}}  p(\textbf{K}_{[1:L]}|  \textbf{R}_{[1:L]}  )  \\
& \propto \arg\max_{\textbf{K}_{[1:L]}} p(\textbf{R}_{[1:L]}  |\textbf{K}_{[1:L]}) \\
& \propto \arg\max_{\textbf{K}_{[1:L]}} \int_{Y_1} ... \int _{Y_N} p(\textbf{R}_{[1:L]}| \textbf{Y}_{[1:N]}, \textbf{K}_{[1:L]} ) dY_1...dY_N \\
\end{aligned}
\end{equation}
where $A \propto B$ denotes that for two probability density $A$ and $B$, $A=cB$ for some constant $c$. The complexity of directly solving the above objective function is prohibitively high, where there exist exponential many, $L \times N!$, candidates of $\textbf{K}_{[1:L]}$. On the other hand, given a specific $\textbf{K}_{[1:L]} $, the integration in (\ref{huahua}) can be simplified to calculating the following equation, 

\begin{equation}
\label{deduction2}
\begin{aligned}
&\int_{Y_i} p(\textbf{R}_{[1:L]}| \textbf{K}_{[1:L]} ,Y_i ) dY_i \\
&\propto \int_{0}^{\Gamma} \sum_{k_i = 0}^{\frac{D}{\Gamma}} \prod_{l = 1}^{L} \sum_{j_{K_l(i)l} = -\infty}^{\infty} p(j_{K_l(i)l}  \Gamma +  r_{K_l(i)l} | k_i \Gamma + \mu_i ) d\mu_i  \\
&\propto \int_{0}^{\Gamma} \sum_{k_i = 0}^{\frac{D}{\Gamma}} \prod_{l = 1}^{L}  \sum_{j_{K_l(i)l} = -\infty}^{\infty} \frac{1}{\sqrt{2\pi}\sigma_l} e^{\frac{-(r_{K_l(i)l} - \mu_i + (j_{K_l(i)l} - k_i )\Gamma)^2}{2\sigma_l^2}}    d\mu_i \\
 &\propto \int_{0}^{\Gamma}   \prod_{l = 1}^{L} \sum_{j'_{K_l(i)l} = -\infty}^{\infty}\frac{1}{\sqrt{2\pi}\sigma_l} e^{\frac{-(r_{K_l(i)l} - \mu_i + j'_{K_l(i)j}\Gamma)^2}{2\sigma_l^2}}    d\mu_i
\end{aligned}
\end{equation}
Here, $j_{K_l(i)l}$ enumerates all integers in $\mathbb{Z}$, so does $j'_{K_l(i)j} = j_{K_l(i)l} - k_i$. It is noted that when $L = 1$, we can get:
\begin{equation}
\label{L_1}
\begin{aligned}
 &\sum_{j'_{K_l(i)l} = -\infty}^{\infty} \int_{0}^{\Gamma} \frac{1}{\sqrt{2\pi}\sigma_l} e^{\frac{-2(r_{K_l(i)l} - \mu_i + j'_{K_l(i)l}\Gamma)^2}{2\sigma_l^2}}    d\mu_i \\
&= \int_{-\infty}^{\infty} \frac{1}{\sqrt{2\pi}\sigma_l} e^{\frac{-2(r_{K_l(i)l} - \mu_i)^2}{2\sigma_l^2}}    d\mu_i
\end{aligned}
\end{equation}
This motivates us to think about whether we can remove the product term on $l$ in (\ref{deduction2}) and simplify it into a closed-form formula as (\ref{L_1}) under some mild assumptions. In the following, we introduce Assumption 1, under which a polynomial time algorithm is creatively proposed to deterministically derive the MAP estimation for $\textbf{K}_{[1:L]}$. We start from introducing some notations for {\em noise distributing intervals}: for each $i = 1, 2,..., N$, we define an clockwise interval $I_i$ as $I_i = [ \mu_i+ \min_{l} \Delta_{il},  \mu_i+ \max_{l} \Delta_{il} ]$, i.e., starting from $\mu_i+ \min_{l} \Delta_{il}$ to $\mu_i+ \max_{l} \Delta_{il}$ clockwise, which are illustrated in Fig. \ref{circle-interval}. In addition, let $|I_i|$ denote the length of the directed interval $I_i$, i.e., $|I_i|:=\max_{l} \Delta_{il} - \min_{l} \Delta_{il}$.

\begin{figure}
\centering
\includegraphics[width=2.87 in,height=1.66 in]{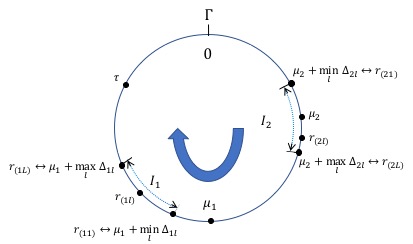}
\caption{Illustration for the noise interval}
\label{circle-interval}
\end{figure}

\begin{figure}
\centering
\includegraphics[width=2.795 in,height=0.572in]{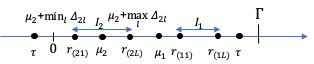}
\caption{Illustration for straightened circle cut at $\tau$  }
\label{line}
\end{figure}

\begin{cond}
\label{cond}
There exists some point $\tau$ on the circle modulo $\Gamma$ such that it is not within any interval $I_i$ and $|I_i| < \frac{\Gamma}{2}$ for $i=1,2,...,N$.
\end{cond}

\begin{remark}
As illustrated in Fig. \ref{circle-interval}, $\tau$ can be an arbitrary point on the circle which is not within any intervals $I_i$. On the other hand, even if $\bm{K}_{[1:L]}$ is determined and the problem is simplified to $N$ independent RCRT for a single number, we still need further limitations on $\Delta_{il}$ to guarantee successful reconstructions. Robustness is proved to be achieved in \cite{closed}, \cite{sp2017} when $|I_i| < \frac{\Gamma}{2}$ for $i=1,2,...,N$. That is the reason why we assume $|I_i| < \frac{\Gamma}{2}$ for each $i$ in Assumption 1.
\end{remark}

Assumption 1 provides the convenience in analysis where the circle can be virtually cut at point $\tau$, and straightened into a line where the order of $\Delta_{il}$ for each $i$ is still preserved. Fig. \ref{line} is an illustration for the above operation if we cut the circle in Fig. \ref{circle-interval} at $\tau$ and straighten it to a line. The order of $r_{il}$ on the line corresponds to the order $\Delta_{il}$ accordingly in an ascending order. Specifically, for each $i$, we denote $\{r_{(il)}, l = 1, 2,..., L\}$ as a clockwise order statistic of $\{r_{il}, l = 1, 2,..., L\}$. Here, $(il)$ denotes a permutation on the index $\{l, l = 1,2,...,L\}$ for each $i$, such that $r_{(il)}$ is $l^{th}$ element of $\{r_{il}, l = 1, 2,..., L\}$ clockwise distributed on the circle starting from $\tau$, illustrated in Fig. \ref{circle-interval}.

\begin{lm}
\label{cutting}
For each $i$, the errors $\{\Delta_{(il)}\}$ are corresponding to the subsequence $r_{(i1)},r_{(i2)}, ...,r_{(iL)}$, which are in ascending order, i.e., $\{\Delta_{(il)}\}$ are the order statistic.
\end{lm}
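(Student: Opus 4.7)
The plan is to exploit Assumption~\ref{cond}: the cutting point $\tau$ lies outside every noise distributing interval $I_i$, and each $|I_i|<\Gamma/2$. Together these conditions say that, for every fixed $i$, all of the $r_{il}$ (for $l=1,\dots,L$) lie on a single connected arc of the residue circle that does not contain $\tau$. This is the geometric statement that must be translated into an ordering statement.

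First I would rewrite the residues in the convenient form. Using the identity $\mu_i=\langle\langle Y_i\rangle_{m_l}\rangle_{\Gamma}=\langle Y_i\rangle_{\Gamma}$ from (\ref{proj_basic}), we have $r_{il}=\langle R_{il}\rangle_{\Gamma}=\langle Y_i+\Delta_{il}\rangle_{\Gamma}=\langle\mu_i+\Delta_{il}\rangle_{\Gamma}$. Thus on the circle modulo $\Gamma$, $r_{il}$ is obtained from the anchor point $\mu_i$ by a clockwise displacement equal to $\Delta_{il}$ (measured as a signed arc length). Consequently $I_i=[\mu_i+\min_l\Delta_{il},\,\mu_i+\max_l\Delta_{il}]$ (clockwise) is exactly the arc spanned by the $r_{il}$'s, so by construction each $r_{il}\in I_i$.

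Next I would carry out the straightening formally. Cut the circle at $\tau$ and map the circle to the interval $[\tau,\tau+\Gamma)$ on the real line by sending each point $x$ on the circle to the unique $\tilde x\in[\tau,\tau+\Gamma)$ congruent to $x$ modulo $\Gamma$. This map is a bijection that preserves clockwise order because it merely records the clockwise arc distance from $\tau$. Since $\tau\notin I_i$ and $|I_i|<\Gamma/2<\Gamma$, the whole arc $I_i$ maps injectively into $[\tau,\tau+\Gamma)$ without being split across the cut. In particular, there exists a single integer $c_i\in\{0,-1\}$ (depending on whether the representative $\mu_i+\Delta_{il}$ already lies in $[\tau,\tau+\Gamma)$ or needs a $\Gamma$-shift) such that, uniformly for every $l$,
\begin{equation*}
\tilde r_{il}=\mu_i+\Delta_{il}+c_i\Gamma.
\end{equation*}
The fact that $c_i$ does not depend on $l$ is the critical point, and it is forced by $I_i$ being one connected arc disjoint from $\tau$ (of length $<\Gamma$).

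Finally, the map $l\mapsto\tilde r_{il}$ differs from $l\mapsto\Delta_{il}$ by the fixed additive constant $\mu_i+c_i\Gamma$, hence is strictly monotone. Therefore, sorting the $r_{il}$'s in clockwise order from $\tau$ (which, under the straightening, is ascending order of $\tilde r_{il}$) induces exactly the ascending order of the $\Delta_{il}$'s, giving the claimed correspondence $r_{(i1)}<\cdots<r_{(iL)}\;\Longleftrightarrow\;\Delta_{(i1)}\le\cdots\le\Delta_{(iL)}$. The only subtle step is the uniformity of the integer shift $c_i$ across $l$; this is where Assumption~\ref{cond} enters, and it is the only place where a careful argument is needed — everything else is bookkeeping.
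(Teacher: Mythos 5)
Your proof is correct and follows essentially the same route as the paper's: both arguments rest on the observation that, because $\tau$ lies outside the arc $I_i$ (whose length is less than $\Gamma$), cutting the circle at $\tau$ unrolls each $r_{il}=\langle\mu_i+\Delta_{il}\rangle_{\Gamma}$ by an integer multiple of $\Gamma$ that is uniform in $l$, so the clockwise order from $\tau$ coincides with the ascending order of the $\Delta_{il}$. The paper states this more tersely (asserting directly that $r_{(i1)}$ is the point closest to $\tau$ counterclockwise and so on), whereas you make the key uniform-shift step explicit; your parenthetical restriction $c_i\in\{0,-1\}$ is inessential and need not hold verbatim, but only the uniformity of $c_i$ across $l$ is actually used.
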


\begin{proof}
According to Assumption 1, $\tau$ is not within any $I_i$. Therefore, $I_i$, the directed interval defined, is clockwise distributed starting from $\langle \mu_i + \min_l \Delta_{il} \rangle_{\Gamma}$ to $\langle \mu_i + \max_l \Delta_{il} \rangle_{\Gamma}$ and $\tau$ is in the complementary part, $[0,\Gamma) / I_i$. It is obvious that $r_{(i1)}$ is closest to $\tau$ in counterclockwise direction, while $r_{(iL)}$ is the closest one in clockwise direction among all $r_{i[1:L]}$. For each $i$, $r_{(il)}$ is exactly in the order, so corresponding $\{\Delta_{(il)}\}$ are arranged in ascending order as well.
\end{proof}

Lemma 1 shows that if $\tau$ is known, starting from $\tau$, for each $i$, the clockwise order of $\{r_{il}\}$ on the circle is exactly the order of $\{\Delta_{il}\}$ in an ascending order accordingly. In order to intuitively understand the relative positions of $\{r_{il}\}$, we convert the distribution of them on a circle to the one on an axis (cutting the circle at $\tau$ and stretching it into a line), which will ease the following analysis. To this end, we give the following definition and lemma.

\begin{df}
When $0 \leq \tau \leq \min {r}_{{(il)}}$ or $\max {r}_{{(il)}} \leq \tau <\Gamma$, for $i=1,2,...,N$ and $l=1,2,...,L$,

\begin{equation}
\label{shift}
\tilde{r}_{{il}} = {r}_{{il}}
\end{equation}

Otherwise,
\begin{equation}
\begin{cases}
\tilde{r}_{{il}} = {r}_{{il}},  ~~~~~when~~ {r}_{{il}} \leq \tau \\
\tilde{r}_{{il}} = {r}_{{il}}-\Gamma, ~~~~ when~~ {r}_{{il}} > \tau  \\
\end{cases}
\end{equation}
\end{df}




\begin{lm}
Under Assumption 1, given $\{r_{il}\}$ and $\{K_l\}$ for $i=1,2,...,N$ and $l=1,2,...,L$, $I_i$ can be uniquely determined.
\end{lm}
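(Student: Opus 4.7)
The plan is to show that for each $i$, the observed residues $\{r_{K_l(i),l}\}_{l=1,\ldots,L}$---those that the permutation $K_l$ assigns to $Y_i$---determine a unique directed arc on the circle of circumference $\Gamma$, and that this arc must coincide with $I_i$. The key device is to exploit the two parts of Assumption \ref{cond} simultaneously: the uniform bound $|I_i| < \Gamma/2$ and the existence of the separating point $\tau$.

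First, I would fix $i$ and collect the $L$ points $\{r_{K_l(i),l}\}_{l=1,\ldots,L} = \{\langle \mu_i + \Delta_{il}\rangle_\Gamma\}$ on the circle; by definition these all lie in the (unknown) arc $I_i$, whose length is strictly less than $\Gamma/2$. Consequently, the complementary arc has length strictly greater than $\Gamma/2$. Now consider the $L$ cyclic gaps formed by consecutive residues around the circle: because the complementary arc contains no residue, at least one of these gaps must exceed $\Gamma/2$ in length, and two gaps of length greater than $\Gamma/2$ cannot coexist on a circle of circumference $\Gamma$. Hence this large gap is unique and fully recoverable from the data.

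Second, I would declare $I_i$ to be the arc complementary to this unique large gap, oriented clockwise from the residue immediately following the gap to the residue immediately preceding it. Invoking Lemma \ref{cutting}, once a cut point $\tau$ is taken inside the large gap, these two endpoints must correspond to $\langle \mu_i + \min_l \Delta_{il}\rangle_\Gamma$ and $\langle \mu_i + \max_l \Delta_{il}\rangle_\Gamma$ respectively, so the reconstructed arc agrees with $I_i$ as originally defined.

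The main subtlety I anticipate is that $\tau$ itself is not observed, and a priori one might worry that this latent choice affects the identification. I expect this to be resolved precisely by the uniqueness of the large gap: Assumption \ref{cond} guarantees that some admissible $\tau$ lies inside it, while the two endpoints of the complementary arc do not depend on the exact location of $\tau$ within that gap. Hence the mere existence of $\tau$ suffices, and the uniqueness of $I_i$ follows without the procedure ever having to commit to a particular $\tau$.
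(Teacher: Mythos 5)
Your proof is correct and is essentially the paper's own argument viewed from the complementary side: where the paper shows that any clockwise arc with residue endpoints containing all of $\{r_{K_l(i)l}\}$ and of length less than $\Gamma/2$ is unique (a second such arc would contain the complement of the first and hence have length exceeding $\Gamma/2$), you show equivalently that exactly one cyclic gap between consecutive residues exceeds $\Gamma/2$ and identify $I_i$ as its complement. The underlying pigeonhole step on the circle of circumference $\Gamma$ is the same, so no further comparison is needed.
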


\begin{proof}
All samples $\{r_{il}\}$ are divided into $N$ subsets according to $\{K_l\}$, each of which includes the $L$ error residues of $Y_i$, represented by $\{r_{K_l(i)l}, l=1,2,...,L\}$. When Assumption 1 holds, there should exist a clockwise directed interval over the circle starting from $r_{K_{l_1}(i)l_1}$ and ending at $r_{K_{l_2}(i)l_2}$ for some $l_1 \not = l_2 \in \{1,2,...,L\}$ such that the length is smaller than  $\frac{\Gamma}{2}$. All the remaining samples $\{r_{K_l(i)l}, l=1,2,...,L, l\not =l_1,l_2\}$ lie in the interval $I_i$. Clearly, $r_{K_{l_2}(i)l_2}$ is clockwise neighboring to $r_{K_{l_1}(i)l_1}$. We claim such an interval is unique. Otherwise, we assume there are two indices $l_3$ and $l_4$ such that the clockwise directed interval, $I'_i$, starting from $r_{K_{l_3}(i)l_3}$ to $r_{K_{l_4}(i)l_4}$ also has a length smaller than $\frac{\Gamma}{2}$, which contains $\{r_{K_l(i)l}, l=1,2,...,L, l\not =l_3,l_4\}$. Thus, the interval $I'_i$ includes the complement part of $I_i$. Since $|I_i|$ is smaller than $\frac{\Gamma}{2}$, therefore, $|I'_i| \geq \Gamma - |I_i| >\frac{\Gamma}{2}$, which incurs a contradiction. Thus, our claim holds.
\end{proof}

To proceed from Lemma 2, we use $A1$ standing for Assumption 1 for simplicity, shown in the following formulas. We further modify our objective function as follows:
\begin{align}
\begin{split}
\hat{\textbf{K}} _{[1:L]}
& := \arg\max_{\textbf{K}_{[1:L]}}  p(\textbf{K}_{[1:L]} |  \textbf{R}_{[1:L]}, A1)  \\
& \propto \arg\max_{\textbf{K}_{[1:L]}} p(\textbf{K}_{[1:L]} |  \textbf{R}_{[1:L]}, A1)  \times p(\textbf{R}_{[1:L]}|A_1) \\
& = \arg\max_{\textbf{K}_{[1:L]}} p(\textbf{R}_{[1:L]}  | A1, \textbf{K}_{[1:L]}) \times p(\textbf{K}_{[1:L]}|A_1) \\
\end{split}
\end{align}

In the following, we prove Assumption 1 is independent of permutation $\textbf{K}_{[1:L]}$. For any $\textbf{K}_{[1:L]}$, we have
\begin{align}
\begin{split}
\Pr (A1 | \textbf{K}_{[1:L]}) = \int_{ \textbf{R}_{[1:L]}} p(A1, \textbf{R}_{[1:L]}| \textbf{K}_{[1:L]}) d \bm{R}_{[1:L]} \\
= \int_{\bm{R}_{[1:L]}} p(   A1 |\textbf{K}_{[1:L]}  , \textbf{R}_{[1:L]}  ) \times p(\textbf{R}_{[1:L]} | \textbf{K}_{[1:L]} )d \bm{R}_{[1:L]} \\
= \int_{\bm{R}_{[1:L]}} p(  A1 |\textbf{K}_{[1:L]} , \bm{R}_{[1:L]}  ) \times p(\bm{R}_{[1:L]} ) d \bm{R}_{[1:L]}
\end{split}
\end{align}
Also,
\begin{equation}
\begin{aligned}
&\Pr (A1) =\int_{\bm{R}_{[1:L]}} \sum_{\textbf{K}_{[1:L]}} p(A1, \textbf{K}_{[1:L]}, \bm{R}_{[1:L]}) \times p(\textbf{K}_{[1:L]}) d\bm{R}_{[1:L]}\\
&=\sum_{\textbf{K}_{[1:L]}} \int_{\bm{R}_{[1:L]}} p(A1|\textbf{K}_{[1:L]},\bm{R}_{[1:L]}) \\
& ~~~~~~~~ ~~~~~~~~~~~\times p(\bm{R}_{[1:L]}|\textbf{K}_{[1:L]}) \times p(\textbf{K}_{[1:L]}) d\bm{R}_{[1:L]}
\end{aligned}
\end{equation}

Since we assume $\textbf{K}_{[1:L]}$ are uniformly distributed, it suffices to show independence that $\int_{R} p(A1|\textbf{K}_{[1:L]},\bm{R}_{[1:L]}) \times p(\bm{R}_{[1:L]}|\textbf{K}_{[1:L]}) d\bm{R}_{[1:L]}$ remains constant across all $\textbf{K}_{[1:L]}$. On the other hand, as the conditional probability density of $p(\bm{R}_{[1:L]}|\textbf{K}_{[1:L]})$ is a normal distribution, we know $\int_{\bm{R}_{[1:L]}} p(A1|\textbf{K}_{[1:L]},\bm{R}_{[1:L]}) \times p(\bm{R}_{[1:L]}|\textbf{K}_{[1:L]}) d\bm{R}_{[1:L]}$ is constant across all $\textbf{K}_{[1:L]}$. Thus, Assumption 1 is independent of permutation $\textbf{K}_{[1:L]}$.

 If $\textbf{K}_{[1:L]}$ is a correct residue classification, assuming that the $cutting~point$ is $\tau$ and following the notations given in Definition 1, a closed form of (\ref{huahua}) can be derived as follows.

\begin{lm} When $\Pr(\textbf{K}_{[1:L]},\textbf{R}_{[1:L]}, A1) \not =0$,
 \begin{equation}
 \begin{aligned}
 \label{MAP}
 \Pr(\textbf{R}_{[1:L]} | A1, & \textbf{K}_{[1:L]} )  = \Pr(\textbf{r}_{[1:L]} | A1, \textbf{K}_{[1:L]} )    \\
& \propto \prod_{i=1}^{N} \int_{-\infty}^{\infty} e^{-\sum_{l=1}^{L} w_l (x-\tilde{r}_{K_l(i)l})^2} dx
  \end{aligned}
 \end{equation}
 \end{lm}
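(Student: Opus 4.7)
The plan is to decompose the proof into three reductions, each exploiting a different structural feature of the model.

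First I would establish the equality $\Pr(\mathbf{R}_{[1:L]} \mid A1, \mathbf{K}_{[1:L]}) = \Pr(\mathbf{r}_{[1:L]} \mid A1, \mathbf{K}_{[1:L]})$ by showing that the uniform prior of $Y_i$ over $[0, D)$ annihilates any dependence of the likelihood on the integer quotients $j_{il}$ of the observations. Writing $Y_i = k_i\Gamma + \mu_i$ and $R_{il} = j_{il}\Gamma + r_{il}$ and marginalising over $k_i \in \{0, \ldots, D/\Gamma - 1\}$ converts each inner sum $\sum_{j_{K_l(i)l}}$ into a sum over $j'_{K_l(i)l} := j_{K_l(i)l} - k_i$ that still ranges over all of $\mathbb{Z}$. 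This is exactly the cancellation already visible between the second and third lines of (\ref{deduction2}). Because Assumption 1 is formulated purely in terms of the intervals $I_i$ on the circle modulo $\Gamma$, conditioning on $A1$ is compatible with the reduction, and no information in $\{j_{il}\}$ survives.

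Second, I would factor the conditional density over the index $i$. Given $\mathbf{K}_{[1:L]}$, the batch $\{R_{K_l(i)l}\}_{l=1}^L$ attached to $Y_i$ is independent of the batches assigned to the other $i$'s, since the $Y_i$'s are independent and the noises are i.i.d. After integrating out each $Y_i$ separately, the joint likelihood thus factorises as $\prod_{i=1}^N$ of the single-$i$ integral displayed in the last line of (\ref{deduction2}).

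Third, for each $i$ I would invoke the cutting operation set up in Definition 1. Assumption 1 provides a point $\tau \notin \bigcup_i I_i$ with $|I_i| < \Gamma/2$, so by Lemma 1 and Lemma 2 the interval $I_i$ is unambiguously identified from $\mathbf{K}_{[1:L]}$ and $\{r_{il}\}$, and there is a unique shift $j'^{\star}_l \in \mathbb{Z}$ sending $r_{K_l(i)l}$ to $\tilde{r}_{K_l(i)l}$. All the $\tilde{r}_{K_l(i)l}$ then cluster within a window of width $|I_i| < \Gamma/2$ on the straightened line. I would re-parametrise $\mu_i$ along this same straightened window rather than $[0, \Gamma)$; then only the $j'^{\star}_l$-summand of each $\sum_{j'_{K_l(i)l}}$ puts Gaussian mass inside the window, while the omitted summands aggregate into a constant that depends only on $\Gamma$ and $\{\sigma_l\}$, not on $\mathbf{K}_{[1:L]}$. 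Extending the integration from the window to $\mathbb{R}$ introduces one further $\mathbf{K}_{[1:L]}$-independent constant, after which one obtains $\int_{-\infty}^{\infty} \prod_l \frac{1}{\sqrt{2\pi}\sigma_l} \exp(-(\tilde{r}_{K_l(i)l}-\mu_i)^2/(2\sigma_l^2))\, d\mu_i$, i.e.\ the claimed expression with $w_l = 1/(2\sigma_l^2)$.

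The hardest part, I expect, is this last step: making the replacement of the wrapped-Gaussian product by a single-shift product, and the extension of the integration range, exact up to a proportionality that is uniform over admissible $\mathbf{K}_{[1:L]}$. The permutation-invariance of $\Pr(A1)$ derived immediately before the lemma statement is what allows the absorbed normalising factor to be taken independent of $\mathbf{K}_{[1:L]}$; the remaining bookkeeping amounts to verifying that the tail contributions from the dropped summands and from truncation outside the straightened window depend on $\{\tilde{r}_{K_l(i)l}\}$ only through quantities that are symmetric in, hence oblivious to, the permutation.
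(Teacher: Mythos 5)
Your first two reductions (eliminating the quotients $j_{il}$ via the uniform prior on $k_i$, and factorising over $i$ by independence of the $Y_i$ and of the noises) match the paper's derivation in (\ref{deduction2}) and are sound. The gap is in your third step. You treat the collapse of each wrapped sum $\sum_{j'_{K_l(i)l}}$ to the single summand $j'^{\star}_l$ as an approximation whose error must then be controlled: you assert that the omitted summands ``aggregate into a constant that depends only on $\Gamma$ and $\{\sigma_l\}$,'' and you fall back on the hope that the residual tail contributions are symmetric in, hence independent of, the permutation. Neither claim holds. The dropped summands are Gaussian tails evaluated at $\tilde{r}_{K_l(i)l}-\mu_i+j\Gamma$ for $j\neq j'^{\star}_l$; after taking the product over $l$ and integrating over $\mu_i$, the resulting cross terms depend on the pairwise spacings \emph{within} each cluster $\{\tilde{r}_{K_l(i)l}\}_{l}$ --- which is precisely the information that the permutation $\mathbf{K}_{[1:L]}$ determines. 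The leftover is therefore neither a constant nor permutation-invariant, and your route can only deliver the lemma up to an uncontrolled, $\mathbf{K}_{[1:L]}$-dependent error, whereas the statement asserts an exact proportionality.

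The paper's proof avoids this entirely by using the conditioning on $A1$ as an exact selection device rather than as a way of locating a window for an approximation. Under $A1$, Lemmas 1 and 2 show that the clockwise order of $\{\tilde{r}_{K_l(i)l}\}$ starting from the cutting point $\tau$ coincides with the ascending order of the errors $\{\Delta_{K_l(i)l}\}$ and that the interval $I_i$ is unique; hence, conditioned on $A1$, the error vector is forced into the single form $\{\mu_i-\tilde{r}_{K_l(i)l}+g\Gamma\}_{l}$ for one common $g\in\mathbb{Z}$. All other branches of the wrapped sum are excluded by the very event being conditioned on, so there are no tails to estimate. Summing over $g$ and integrating $\mu_i$ uniformly over $[0,\Gamma)$ then yields exactly the integral over all of $\mathbb{R}$ appearing in (\ref{MAP}), and invariance under different admissible choices of $\tau$ is just translation invariance of that integral (the point the paper's proof checks first). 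If you recast your third step so that $A1$ restricts the support of the error configuration instead of merely defining a window on the circle, your argument becomes the paper's and the proportionality is exact; the permutation-invariance of $\Pr(A1\mid\mathbf{K}_{[1:L]})$ that you cite is still needed, but only to normalise the conditional density, not to absorb any tail terms.
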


where $w_l$ is the weight determined by $\sigma_l$, i.e., $w_l =\frac{1}{2\sigma^2_l}.$
 \begin{proof}
First we need to clarify, given $\textbf{r}_{[1:L]} $ and $\textbf{K}_{[1:L]}$, if there are multiple candidates of cutting point $\tau$, the value of (\ref{MAP}) is invariant to different selections of $\tau$. For a fixed $i$ and different possible $\tau \not \in I_i$, the relative positions of $\{\tilde{r}_{K_l(i)l}\}$ do not change, as proved in Lemma 1. The only difference is that there may be a uniform shift on $\{\tilde{r}_{K_l(i)l}\}$, i.e., two different cutting points may result in two different groups $\{\tilde{r}_{K_l(i)l}\}$ and $\{\tilde{r}'_{K_l(i)l}\}$ whereas $\tilde{r}_{K_l(i)l}-\tilde{r}'_{K_l(i)l}$ equals a constant: a multiple of $\Gamma$. However, replacing $\{\tilde{r}_{K_l(i)l}\}$ with $\{\tilde{r}'_{K_l(i)l}\}$ in (\ref{MAP}), the value of (\ref{MAP}) does not change due to the integral on $x$ along the $\mathbb{R}$.

With lemma 2 and Assumption 1, for each $i$, given some $\mu_i \in [0,\Gamma)$, if we sort $\{\tilde{r}_{K_l(i)l}\}$ in an ascending order, accordingly $\{\mu_i+\Delta_{K_l(i)l}\}$ are also sorted ascendingly. Therefore, the errors $\{ \Delta_{K_l(i)l}, l=1,2,...,L\}$ must be in a form $\{\mu_i-\tilde{r}_{K_l(i)1}+g\Gamma,\mu_i-\tilde{r}_{K_l(i)2}+g\Gamma, ... ,\mu_i-\tilde{r}_{K_l(i)L}+g\Gamma\}$, for some $g \in \mathbb{Z}$. On the other hand, since $\mu_i$ are i.i.d. uniformly distributed in $[0,\Gamma)$, we can conclude that, for each $i$, under the residue classification $\bm{K}_{[1:L]}$, the probability density of $r_{il}$ is proportional to $\int_{-\infty}^{\infty} e^{-\sum_{l=1}^{L} w_l (x-\tilde{r}_{K_l(i)l})^2} dx$. Due to the independence of $\mu_i$, (\ref{MAP}) follows.
 \end{proof}

Furthermore, we show in the following that there exists an efficient scheme to determine the optimal solutions of  (\ref{huahua}).  Before proceeding, we introduce the following notations for clarity. For any given $\tau \in [0,\Gamma)$, let $\gamma^{\tau}_{(i)l}$, $l \in \{1,2,...,L\}$, denote the $i^{th}$ item of $\{\tilde{r}_{il}, i=1,2,...,N\}$ sorted in ascending order.
 \begin{thm}
 The MAP estimation of $\bm{K}_{[1:L]}$ under Assumption 1 is to determine a cutting point $\tau \in [0,\Gamma)$ such that
 \begin{equation}
 \label{thm2}
 \arg \max_{\tau}  \sum_{i=1}^{N} [\frac{(\sum_{l=1}^{L}\tilde{r}_{K_l(i)l} w_l)^2}{\sum_{l=1}^{L} w_l}- \sum_{l=1}^{L} w_l \tilde{r}^2_{K_l(i)l}]
 \end{equation}
and the optimal clustering strategy is to group $\{\gamma^{\tau}_{(i)l}, l=1,2,...,L\}$: i.e., clustering the $i^{th}$ largest elements among each set $\{ \tilde{r}_{[1:N]l}\}$ together for each $i$.
 \end{thm}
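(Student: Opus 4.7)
The plan is to begin from the closed-form likelihood given in Lemma 3 (equation (\ref{MAP})) and evaluate the Gaussian integral directly. Setting $W := \sum_{l=1}^L w_l$ and $\bar{a}_i := \tfrac{1}{W}\sum_l w_l\, \tilde{r}_{K_l(i)l}$, completing the square gives
\begin{equation*}
\sum_{l=1}^L w_l\bigl(x-\tilde{r}_{K_l(i)l}\bigr)^2 = W(x-\bar{a}_i)^2 + \Bigl(\sum_l w_l\tilde{r}_{K_l(i)l}^2 - W\bar{a}_i^2\Bigr),
\end{equation*}
so each inner integral equals $\sqrt{\pi/W}$ times $\exp\!\bigl(-\sum_l w_l\tilde{r}_{K_l(i)l}^2 + (\sum_l w_l\tilde{r}_{K_l(i)l})^2/W\bigr)$. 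Taking the product over $i$ and a logarithm, maximizing (\ref{MAP}) over $\textbf{K}_{[1:L]}$ is equivalent to maximizing the expression in (\ref{thm2}).

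Next I would reduce the per-cluster objective to a weighted sum of pairwise squared distances via the Lagrange identity
\begin{equation*}
W\sum_l w_l a_l^2 - \Bigl(\sum_l w_l a_l\Bigr)^2 = \sum_{l_1<l_2} w_{l_1}w_{l_2}(a_{l_1}-a_{l_2})^2,
\end{equation*}
applied with $a_l = \tilde{r}_{K_l(i)l}$. For a fixed cutting point $\tau$, maximizing (\ref{thm2}) over $\textbf{K}_{[1:L]}$ is therefore equivalent to minimizing
\begin{equation*}
\sum_{i=1}^N \sum_{l_1<l_2} w_{l_1}w_{l_2}\bigl(\tilde{r}_{K_{l_1}(i)l_1} - \tilde{r}_{K_{l_2}(i)l_2}\bigr)^2.
\end{equation*}

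The last step is to argue that rank-matching is optimal. For each fixed pair $(l_1,l_2)$, the inner double sum is a bipartite matching problem between $\{\tilde{r}_{il_1}\}_{i=1}^N$ and $\{\tilde{r}_{il_2}\}_{i=1}^N$ under squared-difference cost, whose minimizer---by the rearrangement inequality---pairs the $i$-th smallest of one multiset with the $i$-th smallest of the other. Crucially, this prescription ``sort within each sampler and match by rank'' is the same for every pair $(l_1,l_2)$, so the $N$-way partition $\{\gamma^{\tau}_{(i)l}\}_{l=1}^L$ simultaneously attains the pairwise minima and hence the minimum of the full double sum. The outer optimization over $\tau\in[0,\Gamma)$ is then a one-dimensional sweep; since the $\tilde{r}_{il}$ change only at the $NL$ shift points $\tau = r_{il}$, only finitely many candidates need to be evaluated.

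The main obstacle is exactly this consistency step: a sum over bipartite matching problems need not collapse to a single $N$-partition in general. It does here because pairwise rank-matching is determined solely by the within-sampler sort order, which is invariant across pairs, so the pairwise-optimal matchings glue into one coherent $N$-partition. Tightening this argument---including a careful treatment of ties and of cutting points $\tau$ lying on the boundary of some $I_i$, which Assumption \ref{cond} rules out by hypothesis---is where I would concentrate the remainder of the proof.
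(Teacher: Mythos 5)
Your proposal is correct and follows essentially the same route as the paper: complete the square in the Gaussian integral of Lemma 3, discard the clustering-independent terms, apply the rearrangement inequality pairwise, and sweep $\tau$ over the $NL$ candidates $\{r_{il}\}$. Your Lagrange-identity reformulation into pairwise squared distances is a repackaging of the paper's step of dropping the constant $\sum_{i}\sum_{l} w_l \tilde{r}^2_{K_l(i)l}$ and maximizing the cross terms of $\sum_i (\sum_l w_l \tilde{r}_{K_l(i)l})^2$, though your explicit argument that the pairwise rank-matchings glue into a single consistent $N$-partition makes precise a point the paper only asserts.
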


\begin{proof}
For $\textbf{K}_{[1:L]} = (K_1,K_2,...,K_L)$ of any correct residue classification, (\ref{MAP}) can be further simplified as,
\begin{equation}
\begin{aligned}
\label{combine}
 & \prod_{i=1}^{N} \int_{-\infty}^{\infty} e^{-[ (\sum_{l=1}^{L} w_l) x^2 -2\sum_{l=1}^{L}\tilde{r}_{K_l(i)l} w_lx+ \sum_{l=1}^{L} w_l \tilde{r}^2_{K_l(i)l}]} dx\\
&=\prod_{i=1}^{N} \int_{-\infty}^{\infty} exp [- (\sum_{l=1}^{L} w_l )(x-\frac{\sum_{l=1}^{L}\tilde{r}_{K_l(i)l} w_l}{\sum_{l=1}^{L} w_l})^2  \\
&+ \frac{(\sum_{l=1}^{L}\tilde{r}_{K_l(i)l} w_l)^2}{\sum_{l=1}^{L} w_l}- \sum_{l=1}^{L} w_l \tilde{r}^2_{K_l(i)l}]  dx\\
& \propto \sum_{i=1}^{N} [\frac{(\sum_{l=1}^{L}\tilde{r}_{K_l(i)l} w_l)^2}{\sum_{l=1}^{L} w_l}- \sum_{l=1}^{L} w_l \tilde{r}^2_{K_l(i)l}]\\
\end{aligned}
\end{equation}

Given $\tilde{r}_{il}$, $\sum_{i=1}^{N}\sum_{l=1}^{L} w_l \tilde{r}^2_{K_l(i)l}$ in (\ref{combine}) is a constant. Then, we only need to focus on
\begin{equation}
\label{final}
 \sum_{i=1}^{N}  (\sum_{l=1}^{L}\tilde{r}_{K_l(i)l} w_l)^2
\end{equation}
For the rest, we first prove that the residue cluster following the rule of grouping $ C^{\tau}_i = \{\gamma^{\tau}_{(i)1},\gamma^{\tau}_{(i)2},...,\gamma^{\tau}_{(i)L}\}$ for each $i$ achieves the maximum value of (\ref{final}). This is a generalization of the following inequality. For two pairs of numbers $a_1 \leq a_2$ and $b_1 \leq b_2$, we have
\begin{equation}
{(a_1+b_1)}^2+{(a_2+b_2)}^2 \geq {(a_1+b_2)}^2+{(a_2+b_1)}^2
\end{equation}
In general, for two sequences, $\{\gamma^{\tau}_{(i)1},\gamma^{\tau}_{(i)2},...,\gamma^{\tau}_{(i)L}\}$ and $\{\gamma^{\tau}_{(j)1},\gamma^{\tau}_{(j)2},...,\gamma^{\tau}_{(j)L}\}$, both of which are sorted in non-decreasing order, the rearrangement inequality \cite{inequalities} tells that
\begin{equation}
\label{inequ}
\begin{aligned}
\gamma^{\tau}_{(K(1))1}\gamma^{\tau}_{(1)2} &+ \gamma^{\tau}_{(K(2))1}\gamma^{\tau}_{(2)2}+...\gamma^{\tau}_{(K(N))1}\gamma^{\tau}_{(N)2}  \\
& \leq \gamma^{\tau}_{(1)1}\gamma^{\tau}_{(1)2} + \gamma^{\tau}_{(2)1}\gamma^{\tau}_{(2)2}+...\gamma^{\tau}_{(N)1}\gamma^{\tau}_{(N)2}
\end{aligned}
\end{equation}
where $K$ can be any permutation on $\{1,2,...,N\}$. Said another way, the maximum value of (\ref{final}) is achieved when the order is preserved. According to (\ref{inequ}), the optimal value of (\ref{final}) is obtained following the clustering strategy claimed: we group the $i^{th}$ largest elements among $\{\tilde{r}_{[1:N]l}\}$ for each $l$ together.

Since there are $NL$ candidate cutting points, where one may select $\tau = r_{il}$ for each $i$ and $l$, what we prove above presents the local optimal classification strategy for a $\tau$. Therefore, in the worst case, by enumerating all the $NL$ candidate cutting points, we can find the final optimal solution to (\ref{huahua}).
\end{proof}
To conclude, the complexity of computing the MAP for residue clustering under Assumption 1 is reduced to find out the optimal $\tau$ from $NL$ candidates $\{r_{il}\}$.  We conclude the proposed algorithm as follows.

\begin{algorithm}
\caption{Conditional MAP Estimation of Classification}
\textbf{Input:} Given moduli $m_l=\Gamma M_l$ and the residues observed $R_{il}, i=1,2,...,N, l=1,2,...,L$.

1. Calculate $ {r}_{il} = \langle  R_{il} \rangle_\Gamma$;

2. Calculate $\widetilde{r}_{il}$ according to Definition 1.

3. Derive the permutation $K_l$ according to Theorem 1, i.e., find the best $\tau$.

4. Apply the conventional RCRT for a single number to get $\{\hat{Y}_i\}$.

\textbf{Output:} $\hat{Y}_i$, $i=1,2,...,N$.
\end{algorithm}

\begin{example}
Consider $m_1 = 5 \times 2$, $m_2 = 5 \times 3$, $Y_1 =11$ and $Y_2 = 18$.  For simplicity $w_1 = w_2 = 1$, i.e., noises are in a same level perturbing the samples with sampling rate $m_1$ and $m_2$. Two observations are obtained $\bm{R}_1=\{1,9\}$ and $\bm{R}_2=\{10,3\}$. Accordingly, one can derive $\bm{r}_1=\{1,4\}$ and $\bm{r}_2=\{0,3\}$. Under Assumption 1, $\tau$ can be selected from $\{0,1,3,4\}$. Here we specify the two cases where $\tau=1$ and $\tau=3$. When $\tau=1$, recalling Definition 1, $\tilde{\bm{r}}_1=\{1,-1\}$ and $\tilde{\bm{r}}_2=\{0,-2\}$. According to Theorem 1, the clustering strategy is to group the smallest ones, i.e., $\{-1,-2\}$, and group the largest ones, i.e., $\{1,0\}$, in $\tilde{\bm{r}}_1$ and $\tilde{\bm{r}}_2$, respectively. In this scenario,  the loss function in (\ref{thm2}) equals $-1$. When $\tau =3$, $\tilde{\bm{r}}_1=\{1,-1\}$ and $\tilde{\bm{r}}_2=\{0,3\}$. Similarly, we group $\{-1,0\}$ and $\{1,3\}$ together. The value of loss function in (\ref{thm2}) is $-\frac{5}{2}$ then. Similarly, when we set $\tau=4$ or $\tau=0$, the values of (\ref{thm2}) are the same: $-1$, which achieves the maximal of (\ref{thm2}) among the four cases of $\tau$. \footnote{$\tau=0,2,4$ are all cutting points in this example and lead to the same clustering.} Thus, we find the MAP of clustering by grouping $\{1,10\}$ and $\{9,3\}$ as the residues of $Y_1$ and $Y_2$, respectively. The following reconstruction is applying RCRT for a single number on the two residues sets, respectively.
\end{example}

\section{Algorithm Two: Bayesian Wrapped Gaussian Mixture Model and Two-Step Maximization Fast Algorithm}
\noindent In last section, we studied a conditional MAP of residue clustering. It is noted that after the permutations $\bm{K}_{[1:L]}$ are estimated, we still need to apply conventional RCRT for a single number to derive the final reconstruction of $Y_i$ \cite{closed}. It is therefore an interesting question that whether we can estimate both permutation $\textbf{K}_{[1:L]}$ and $\bm{Y}_{[1:N]}$ at the same time.

In this section, we develop a two-step searching algorithm to figure out the estimations of both. Coming with a slight compromise in computational complexity, the method proposed in this section can achieve stronger robustness compared to Algorithm 1. As mentioned above, if we further consider the problem on the 'small circle' modulo $\Gamma$, we would find it similar to the Gaussian Mixture Model (GMM), where the differences lie on the wrapped gaussian distribution for noisy $R_{il}$ and prior knowledge with respect to the sample generation. Inspired with the techniques to solve GMM, in the following, we will treat both $\bm{{Y}}_{[1:N]}$ and $\textbf{K}_{[1:L]}$ as the targets of estimation instead of $\textbf{K}_{[1:L]}$ only, and develop a MAP estimation for both variables at the same time. From (\ref{deduction2}), it is not hard to observe that $ \Pr (\textbf{K}_{[1:L]}, \bm{Y}_{[1:N]} |  \textbf{R}_{[1:L]}  ) = \Pr (\textbf{K}_{[1:L]}, \bm{\mu}_{[1:N]} |  \textbf{r}_{[1:L]} )$, which implies that we only need to deal with the MAP of $(\textbf{K}_{[1:L]},\bm{\mu}_{[1:N]})$ instead.

To this end, the objective function becomes

\begin{equation}
\begin{aligned}
&\{\hat{\textbf{K}}_{[1:L]}, \bm{\hat{\mu}}_{[1:N]}\}\\
& :=\arg\max_{\textbf{K}_{[1:N]}, \bm{\mu}_{[1:N]}}  \Pr (\textbf{K}_{[1:L]}, \bm{\mu}_{[1:N]} |  \textbf{r}_{[1:L]}  )  \\
&\propto \arg\max_{\textbf{K}_{[1:N]}, \bm{\mu}_{[1:N]}} \Pr (\textbf{r}_{[1:L]}  |  \textbf{K}_{[1:L]}, \bm{\mu}_{[1:N]}) \\
\end{aligned}
\end{equation}

We propose an iterative method to solve the above equation. It proceeds as follows: after initializing $\bm{\mu}_{[1:N]}^{(0)}$, for $(t+1)^{th}$ iteration,\begin{itemize}
	\item Step one: given $\bm{\mu}_{[1:N]}^{(t)}$, deducing:
	\begin{equation}
	\label{StepOne}
	{\textbf{K}}_{[1:L]}^{(t+1)} = \arg\max_{\textbf{K}_{[1:L]}} \Pr (\textbf{r}_{[1:L]}  | {\textbf{K}}_{[1:N]}, \bm{\mu}_{[1:N]}^t)
	\end{equation}
	\item Step two: given $\textbf{K}_{[1:L]}^{(t +1)}$, deducing:
	\begin{equation}
	\bm{{\mu}}_{[1:N]}^{(t+1)} = \arg\max_{\bm{\mu}_{[1:N]}} \Pr (\textbf{r}_{[1:L]}  | \textbf{K}_{[1:L]}^{(t+1)}, \bm{\mu}_{[1:N]})
	\end{equation}
\end{itemize}

In the remaining part of this section, we will propose a fast algorithm to solve each step and prove that it will converge to stead state.
We start from deducing a fast algorithm for step one. Similar to equation (\ref{deduction2}), we have

\begin{equation}
\label{target two}
\begin{aligned}
&\Pr(\textbf{r}_{[1:L]}| \bm{K}_{[1:L]}, \bm{\mu}_{[1:N]} ) \\
& \propto \prod_{l = 1}^{L} \prod_{i = 1}^{N} \sum_{j_{K_{l}(i)l} = -\infty}^{\infty} p(j_{K_{l}(i)l}  \Gamma +  r_{K_{l}(i)l} | k_i \Gamma + \mu_{i}  )  \\
&\propto \prod_{l = 1}^{L} \prod_{i = 1}^{N} \sum_{j_{K_{l}(i)l} = -\infty}^{\infty} \frac{1}{\sqrt{2\pi}\sigma_l} e^{\frac{-(r_{K_{l}(i)l} - \mu_{i}  + (j_{K_{l}(i)l} - k_i )\Gamma)^2}{2\sigma_l^2}} \\
 &\propto \prod_{l = 1}^{L} \prod_{i = 1}^{N} \sum_{j'_{K_{l}(i)l} = -\infty}^{\infty}\frac{1}{\sqrt{2\pi}\sigma_l} e^{\frac{-(r_{K_{l}(i)l} - \mu_{i}  + j'_{K_{l}(i)j}\Gamma)^2}{2\sigma_l^2}}
 \end{aligned}
 \end{equation}

Since $K_l$ are independently and randomly distributed, we may simplify (\ref{target two}) to find an optimal $K_l^{(t+1)}$ for each $l$:

\begin{equation}
\label{target two further}
\begin{aligned}
&K_l^{(t+1)} := \arg\max_{K_l} \prod_{i = 1}^{N}\sum_{j'_{il} = -\infty}^{\infty} e^{\frac{-(r_{il} - \mu^t_{K_{l}}(i) + j'_{il}\Gamma)^2}{2\sigma_l^2}}
\end{aligned}
\end{equation}

In general, since (\ref{target two further}) is hard to solve, here we apply the approximation method used in \cite{mle1}. We define $d_{\Gamma} (a,b) := \min_{j \in \mathbb{Z}} |a-b+j\Gamma |$ as the distance of any two real numbers $a$ and $b$. When the $\sigma_l^2$ is much smaller than $\Gamma$, (\ref{target two further}) can be approximated as
\begin{equation}
\label{circle1}
K_l^{(t+1)} =\arg\min_{K_l}  \sum_{i=1}^{N} d^2_{\Gamma} (r_{il}, \mu^t_{K_l(i)}),
\end{equation}
since $exp(-{\frac{d^2_{\Gamma} (r_{K_l(i)l}, \mu_{i})^2}{2\sigma_l^2}})$ dominates the term $\sum_{j'_{K_l(i)l} = -\infty}^{\infty} exp({\frac{-(r_{K_l(i)l} - \mu_{i} + j'_{K_l(i)l}\Gamma)^2}{2\sigma_l^2}})$.
Although solving (\ref{circle1}) is seemingly of exponential complexity, we will show there exists an $O(N)$-time algorithm in the following theorem.  Let $r_{(i)l}$, $i=1,2,...,N$, denote the $i^{th}$ element of increasingly sorted sequence $\{ r_{(1)l}, r_{(2)l}, ... ,r_{(N)l} \}$ of $\bm{r}_{[1:N],l}$. Similarly, ${\mu}^t_{[i]}$ denotes the $i^{th}$ element of increasingly sorted sequence $\{ {\mu}^t_{[1]}, {\mu}^t_{[2]}, ... ,{\mu}^t_{[N]}\}$ of ${\bm{\mu}}^t_{[1:N]}$.

\begin{thm}
	\label{circlemin1}
	There exists some $\zeta \in \{1,2,...,N\}$ such that the following matching strategy: $(r_{(\langle i+\zeta\rangle_N ) l }, \hat{\mu}_{[i]})$, $i=1,2,...,N$, minimize (\ref{circle1}).
\end{thm}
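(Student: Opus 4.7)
The plan is to reduce the circular matching problem in (\ref{circle1}) to a linear matching on the real line via integer lifts of the $\mu_j$'s, then invoke the classical rearrangement inequality. Concretely, for any candidate permutation $K_l$ and each index $i$, I would pick $e_i \in \mathbb{Z}$ so that $\tilde{\mu}_{K_l(i)} := \mu_{K_l(i)} + e_i \Gamma$ lies within $\Gamma/2$ of $r_{il}$; this is always possible by the very definition of $d_\Gamma$ as a minimum over integer translates. Then $d_\Gamma(r_{il},\mu_{K_l(i)}) = |r_{il} - \tilde{\mu}_{K_l(i)}|$, and the objective in (\ref{circle1}) equals the ordinary Euclidean sum $\sum_{i=1}^{N} (r_{il} - \tilde{\mu}_{K_l(i)})^2$.

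Second, by the classical rearrangement inequality for quadratic sums, for any fixed collection of lifts $\{\tilde{\mu}_j\}$ the permutation minimizing $\sum_i (r_{il} - \tilde{\mu}_{K_l(i)})^2$ is the one pairing the $r$-values and lifted $\mu$-values in common ascending order. Hence at the optimum, $K_l$ must match sorted $r_{(1)l}, \ldots, r_{(N)l}$ with the sorted lifts $\tilde{\mu}_{(1)}, \ldots, \tilde{\mu}_{(N)}$. The key claim is that, for the optimal choice of lifts, the sorted order of the $\{\tilde{\mu}_j\}$ on the real line necessarily corresponds to a cyclic rotation of the circular order $\mu_{[1]} < \cdots < \mu_{[N]}$ on $[0,\Gamma)$. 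To see this, note that the constraints $|r_{il} - \tilde{\mu}_{K_l(i)}| \le \Gamma/2$ and $r_{il} \in [0,\Gamma)$ confine each $\tilde{\mu}_j$ to $[-\Gamma/2, 3\Gamma/2]$, so $e_j \in \{-1,0,1\}$. Rotating the reference cut point on the circle to lie in a $\mu$-free arc reduces to the case $e_j \in \{0,1\}$: the $\mu_j$'s lifted by $0$ fill $[0,\Gamma)$ in their natural order, and those lifted by $\Gamma$ fill $[\Gamma,2\Gamma)$ also in their natural order, so sorting them on the real line yields precisely a cyclic rotation of $\mu_{[1]},\dots,\mu_{[N]}$.

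The hardest part is this third step, in particular justifying the reduction to at most two consecutive values of the lifts and ruling out ``wild'' lift patterns that mix $+\Gamma$ and $-\Gamma$ shifts. My expectation is that this is handled by a local swap argument: if the sorted-order permutation induced by the lifts is not a cyclic rotation of the circular $\mu$-order, then there is an adjacent transposition of matched pairs which strictly decreases the Euclidean cost (and hence the circular cost, since the circular cost is at most the Euclidean cost for any admissible lift). Iterating such exchanges must terminate at a fixed point whose sorted order is a cyclic rotation. Once this is established, there are only $N$ candidate cyclic shifts indexed by $\zeta \in \{1,\dots,N\}$, the minimum of (\ref{circle1}) is attained at one of them, and the matching strategy $(r_{(\langle i+\zeta\rangle_N)l}, \hat{\mu}_{[i]})$ stated in the theorem follows.
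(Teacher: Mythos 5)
Your first two steps are sound: lifting each $\mu_{K_l(i)}$ by an integer multiple of $\Gamma$ so that $d_\Gamma(r_{il},\mu_{K_l(i)})=|r_{il}-\tilde{\mu}_{K_l(i)}|$, and then invoking the rearrangement inequality (for a fixed set of lifts, the sorted pairing maximizes $\sum_i r_{(i)l}\tilde{\mu}_{(i)}$ and hence minimizes the quadratic sum, and since $d_\Gamma$ is a minimum over lifts the re-sorted matching has circular cost no larger than its Euclidean cost). This is a legitimate linearized analogue of what the paper does. The problem is your third step, which you yourself flag as the hardest part and do not actually prove: you must show that, at an optimal matching, the linear order of the lifted values $\tilde{\mu}_j$ is a cyclic rotation of the circular order of the $\mu_j$ on $[0,\Gamma)$. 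Your two suggested routes both have holes. The ``rotate the cut point into a $\mu$-free arc'' reduction to $e_j\in\{0,1\}$ is not justified: the relevant arcs are the matched arcs from $r_{il}$ to $\mu_{K_l(i)}$, each of length up to $\Gamma/2$, and for $N\ge 2$ these can cover the whole circle, so no common cut point need exist (this is exactly why the paper's Algorithm~1 needs Assumption~1, whereas Theorem~2 must hold without it). The fallback ``adjacent transposition strictly decreases the Euclidean cost'' cannot work either, because after your rearrangement step the matching is already Euclidean-optimal for its own lifts; what must be shown is that changing the matching \emph{together with the lifts} (i.e., changing which integer translate realizes $d_\Gamma$ for the swapped pairs) lowers the \emph{circular} cost.

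That missing comparison across different lift choices is precisely the content of the paper's proof, which works directly on the circle: it places the sorted $\{\mu^t_{[i]}\}$ and $\{r_{(i)l}\}$ on two concentric circles, represents each matched pair by a signed angular difference $\xi_{i_1,i_2}\in(-\pi,\pi]$, and shows via a four-case analysis on the signs of $\xi_{i_1,K(i_1)}$ and $\xi_{i_2,K(i_2)}$ that whenever two chords cross, swapping their endpoints strictly decreases $\xi^2_{i_1,K(i_1)}+\xi^2_{i_2,K(i_2)}$ (the cases where one chord ``wraps around'' contribute the extra $4\pi(\omega_{K(i_2)}-\omega_{K(i_1)})$ term, which is exactly the cross-lift comparison your sketch omits). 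A non-crossing perfect matching between two cyclically ordered point sets is a cyclic shift, giving the $N$ candidates $\zeta$. To complete your write-up you would need to supply an equivalent case analysis; as it stands the proposal reduces the theorem to an unproved claim that is essentially the theorem itself.
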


\begin{proof}
	Let $\omega_{i}$ and $\theta_{i}$ within $[0,2\pi)$ denote the angles of $r_{(i)l}$ and ${\mu}^t_{[i]}$ distributed on the 'small circle' modulo $\Gamma$, respectively. Therefore, both $\{\omega_{i}\}$ and $\{\theta_{i}\}$ are also in an ascending order. Correspondingly, the difference of angles between any pair $({\mu}^t_{[i_1]}, r_{(i_2)l})$ is proportional to $\min \{|\omega_{i_2}-\theta_{i_1}|, 2\pi-|\omega_{i_2} - \theta_{i_1}|\}$. To give its geometrical interpretation, consider two concentric circles, as shown in Fig. \ref{concentric}, where $\{{\mu}_{[i]}^t\}$ are distributed on the outer circle and $\{r_{(i)l}\}$ are on the inner one. We define $\xi_{i_1, i_2}$ to represent an angle difference from ${\mu}^t_{[i_1]}$ to $r_{(i_2)l}$ in a counterclockwise direction as positive and otherwise as negative, which is in $(-\pi,\pi]$, illustrated in Fig. \ref{concentric}. Suppose that there is a line connecting each pair $({\mu}^t_{[i_1]}, r_{(i_2)l})$ under the optimal choice, which is also denoted by $\xi_{i_1, i_2}$, {{\em we will prove that those $N$ lines do not intersect each other}}.
	
	We prove it by contradiction. Suppose there exist $i_1$ and $i_2$ such that the two lines $\xi_{i_1, K(i_1)}$ and $\xi_{i_2, K(i_2)}$ cross each other, without loss of generality, we set $\theta_{i_1}=0$, as two concentric circles rotating simultaneously will not affect the distributions of $\theta_{i}$ and $\omega_{i}$ on the two circles. Also we set $\theta_{i_2}\in (0,\pi)$. Note that when $\theta_{i_2}=\pi$, it is impossible to result in crossing. Therefore, when two lines cross, it falls into one of the following four cases:
	\begin{itemize}
		\item 1) $\xi_{i_1, K(i_1)}\geq 0$ and $\xi_{i_2, K(i_2)}\geq 0$, i.e., $0 \leq \theta_{i_1}<\omega_{K(i_2)}<\omega_{K(i_1)} \leq \pi$ and $0<\theta_{i_2}<\omega_{K(i_2)}$
		\item 2) $\xi_{i_1, K(i_1)}\geq 0$ and $\xi_{i_2, K(i_2)}<0$, i.e., $0 \leq \theta_{i_1}<\omega_{K(i_1)}\leq \pi$ and $\pi+\theta_{i_2}<\omega_{K(i_2)}<2\pi$ and $0 \leq \theta_{i_2}<\pi$
		\item 3) $\xi_{i_1, K(i_1)}<0$ and $\xi_{i_2, K(i_2)}\geq 0$, i.e., $0 \leq \theta_{i_1}<\theta_{i_2}<\pi$ and $\pi<\omega_{K(i_1)}<\omega_{K(i_2)}\leq \theta_{i_2}+\pi$
		\item 4) $\xi_{i_1, K(i_1)}<0$ and $\xi_{i_2, K(i_2)}<0$, i.e., $0 \leq \theta_{i_1}<\theta_{i_2}<\pi$ and $\pi+\theta_{i_2} \leq \omega_{K(i_2)}<\omega_{K(i_1)}<2\pi$
	\end{itemize}
	
	If two lines, $\xi_{i_1, K(i_1)}$ and $\xi_{i_2, K(i_2)}$, cross, we will prove that the interchange of $K(i_1)$ and $K(i_2)$ will decrease the value of the right side of (\ref{circle1}). Because $[d^2_{\Gamma} (r_{i_1 l}, \mu_{K(i_1)})+d^2_{\Gamma} (r_{i_2 l}, \mu_{K(i_2)})]$ is proportional to $[\xi^2_{i_1, K(i_1)}+\xi^2_{i_2, K(i_2)}]$, we will use the latter instead of the former in the following discussion.\\
	
	For case 1) and 4), we have $\xi^2_{i_1, K(i_1)}+\xi^2_{i_2, K(i_2)} = {(\omega_{K( i_1)}-\theta_{ i_1})}^2+{(\omega_{K( i_2 )}-\theta_{ i_2})}^2$. When $K(i_1)$ and $K(i_2)$ are switched, we have
		\begin{equation}
		\begin{aligned}
		& [{(\omega_{K(i_1)}-\theta_{i_1})}^2+{(\omega_{K(i_2)}-\theta_{i_2})}^2] \\
		&-[{(\omega_{K(i_2)}-\theta_{i_1})}^2+{(\omega_{K(i_1)}-\theta_{i_2})}^2]  \\
		&= 2(\theta_{i_2} - \theta_{i_1})(\omega_{K(i_1)}- \omega_{K(i_2)}) >0
		\end{aligned}
		\end{equation}
		Next for case 2) and 3), we have $\xi^2_{i_1, K(i_1)}+\xi^2_{i_2, K(i_2) }= {(\omega_{K( i_1)}-\theta_{ i_1}-2\pi)}^2+{(\omega_{K(i_2)}-\theta_{i_2})}^2$. Similarly, switching $K(i_1)$ and $K(i_2)$ results in
			\begin{equation}
			\begin{aligned}
			& [{(\omega_{K( i_1)}-\theta_{ i_1}-2\pi)}^2+{(\omega_{K(i_2)}-\theta_{i_2})}^2] \\
			&-[{(\omega_{K( i_2)}-\theta_{ i_1}-2\pi)}^2+{(\omega_{K(i_1)}-\theta_{i_2})}^2] \\
			&= 2(\theta_{i_2} - \theta_{i_1})(\omega_{K(i_1)}- \omega_{K(i_2)}) +4\pi(\omega_{K( i_2)}-\omega_{K( i_1)})>0
			\end{aligned}
			\end{equation}
			Thus, under four cases, we have proved that when two lines do not cross, the right side of (\ref{circle1}) gets a smaller value.
            \end{proof}

\begin{figure*}
\centering
\includegraphics[width=2 in,height=2 in]{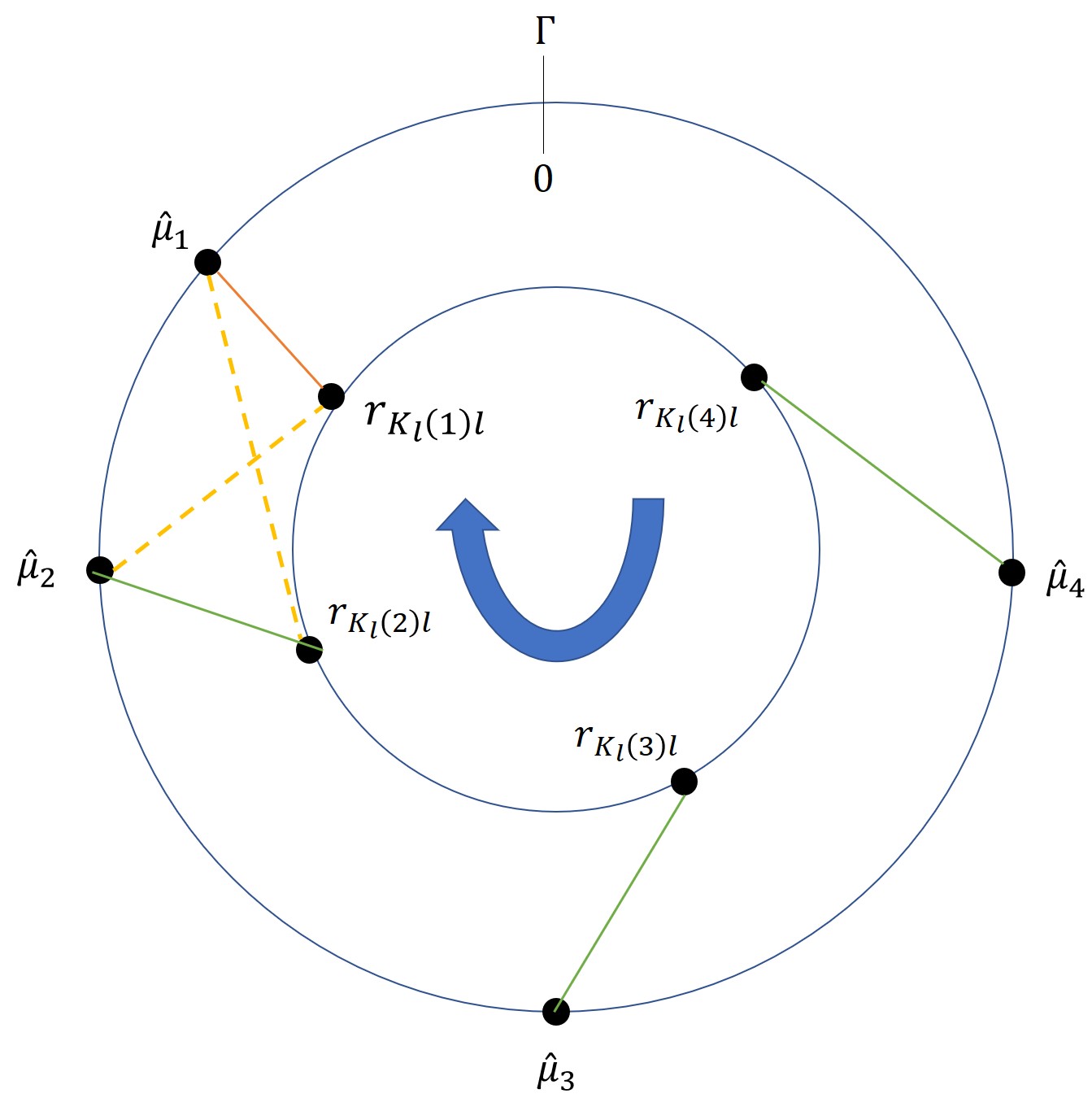}
\centering
\includegraphics[width=2 in,height=2 in]{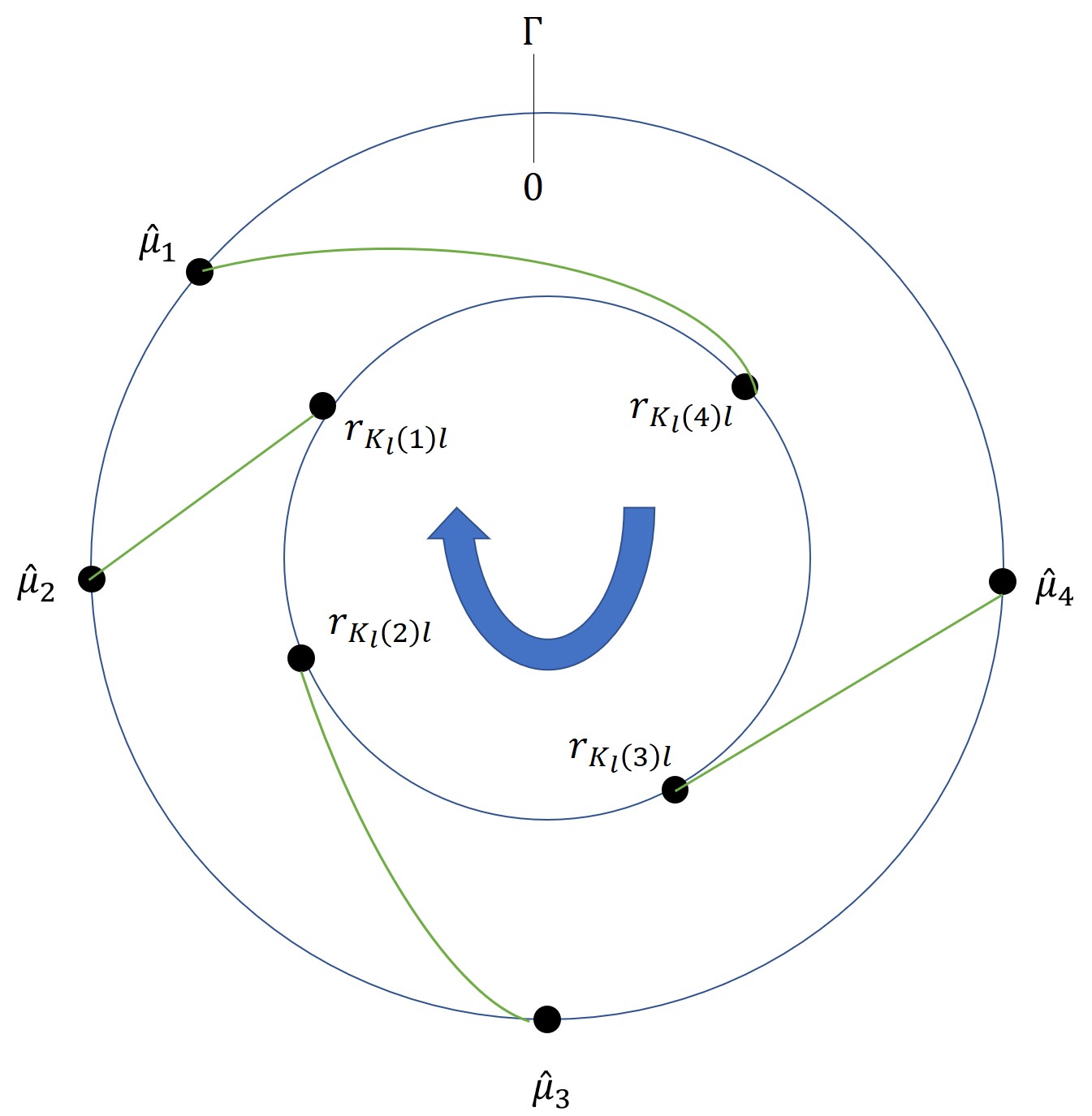}
\caption{Illustration for the Step one of Algorithm 2}
\label{concentric}
\end{figure*}

We now move to the second step: given residue clustering, how to figure out the optimal common residue? It is evident that with given residue clustering, the optimal issue is reduced to $N$ independent estimations for a single common residue. This problem has been previously studied in \cite{mle1}, where it proved that the optimal estimation can be determined in $O(L)$ complexity. For completeness, we present the skeleton of \cite{mle1} as follows with a simplified proof.

With given clustering $\bm{K}_{[1:L]}^{t+1}$, we need to figure out the optimal $\bm{{\mu}}^{t+1}_{[1:N]}$, i.e.,
\begin{equation}
\label{estmu}
\mu_i^{t+1} = \arg\min_{x \in [0,\Gamma)} \sum_{l=1}^{L} d^2_{\Gamma} (x, r_{k_l(i)l})
\end{equation}
where $x$ denotes a point on the 'small circle' modulo $\Gamma$. For simplicity, we assume that $\{\gamma_{(1)}, \gamma_{(2)}, ... ,\gamma_{(L)}\}$ denote $\{r_{k_l(i)l}\}$ in an ascending order. Based on the definition of $d_{\Gamma}$, there must exist $b_{l} \in \{0, \pm1\}$, such that
\begin{equation}
 \sum_{l=1}^{L} w_{l} d^2_{\Gamma} ({\mu^{t+1}_i} , \gamma_l) = \sum_{l=1}^{L} w_{(l)} (\mu^{t+1}_i - \gamma_l-b_{l}\Gamma)^2
\end{equation}
As $d_{\Gamma} ({\mu^{t+1}_i} , \gamma_{(l)}) \leq \frac{\Gamma}{2}$, we consider the interval $I_{\mu^{t+1}_i}=[{\mu^{t+1}_i}-\frac{\Gamma}{2}, {\mu^{t+1}_i}+\frac{\Gamma}{2})$. Without loss of generality, we assume $\mu^{t+1}_i \geq \frac{\Gamma}{2}$. Thus, there must some $(j)$ such that $\gamma_{(j)}, \gamma_{(j)+1}, ... , \gamma_{(L)}, \gamma_{(1)}+\Gamma, ... , \gamma_{(j-1)}+\Gamma$, which are also in an ascending order, all belong to $I_{\mu^{t+1}_i}$. In such case, $d_{\Gamma} ({\mu^{t+1}_i} , \gamma_{(l)}) = |{\mu^{t+1}_i}-\gamma_{(l)}|$ if $l \geq j$; otherwise $d_{\Gamma} ({\mu^{t+1}_i} , \gamma_{(l)}) = \gamma_{(l)}+\Gamma-{\mu^{t+1}_i}$. Substitute the above into (\ref{estmu}), then we have $\mu^{t+1}_i = \frac{\sum_{l=1}^L w_{(L)} \gamma_{(l)}}{\sum_{l=1}^L w_{(l)}} + \frac{\sum_{l=1}^j w_{(l)} \gamma_{(l)}}{\sum_{l=1}^L w_{(l)}}$. Since $j \in [1:L]$, therefore,
\begin{equation}
\label{M-step}
\mu^{t+1}_i \in \{\frac{\sum_{l=1}^L w_{(L)} \gamma_{(l)}}{\sum_{l=1}^L w_{(l)}} + \frac{\sum_{l=1}^j w_{(l)} \gamma_{(l)}}{\sum_{l=1}^L w_{(l)}}, j=0,1,...,L-1\},
\end{equation}
which implies that in Step two, the complexity of estimating each $\mu^{t+1}_i$ is $O(L)$, and total complexity is $O(NL)$ . As a summary, given the estimations of $\bm{\mu}^t$, we can figure out the optimal clustering $\bm{K}^{t+1}_{[1:L]}$ with $O(NL)$ complexity according to Theorem 2. Relying on the estimations of $\bm{K}_{[1:L]}$, we can further determine the optimal $\bm{\mu}^{t+1}$ still in $O(NL)$ complexity. It noted that the number of candidates of $\bm{K}_{[1:L]}$ is finite and thus the algorithm will always converge to some stationary sate.  We conclude such iterative searching as follows.

\begin{algorithm}
\caption{MAP of Classification and Common Residues}
\textbf{Input:} Given moduli $m_l=\Gamma M_l$ and the residue observed $R_{il} $, $i=1,2,...,N$ and $l=1,2,...,L$.

1. Calculate ${r}_{il} = \langle  \widetilde{r}_{il} \rangle_{M_l}$.

2. Initialize $\{\hat{\mu}_i^0, i=1,2,...,N\}$.


3. Begin iteration $t$ from $1$ to $T$:
    \begin{itemize}
     \item Step-1: Given $\{\hat{\mu}_i^{t-1}, i=1,2,...,N\}$, determine the optimal clustering, $\{K^{t}_l, l=1,2,...,L\}$, following Theorem \ref{circlemin1}.

     \item Step-2: Given $\{K^{t}_l, l=1,2,...,L\}$, update $\{\hat{\mu}_i^{t-1}, i=1,2,...,N\}$ by (\ref{estmu}).
    \end{itemize}

4. Calculate
\begin{equation}
    q_{il} = [ \frac{{R}_{K^{T}_l(i)l}-\widetilde{r}_{K^{T}_l(i)l}}{\Gamma} ]
\end{equation}
and reconstruct quotient $Q_i$ from $q_{il}$ with moduli $M_l$ via conventional CRT.

5. Reconstruct $ \hat{Y}_i = Q_i\Gamma + \hat{\mu}_i(T)$.

\textbf{Output:}  $\hat{Y}_i$, $i=1,2,...,N$.
\end{algorithm}

\begin{example}
Consider $m_1 = 5 \times 2$, $m_2 = 5 \times 3$, $m_3=5 \times 7$, $Y_1 =11$, $Y_2 = 18$ and $Y_3=64$. We set $w_1 = w_2 = w_3= 1$. Three observations are obtained $\bm{R}_1=\{2, 9, 4.3\}$, $\bm{R}_2=\{10, 3, 3.6\}$ and $\bm{R}_3=\{10.5, 19.1, 29.4\}$. Accordingly, one can derive $\bm{r}_1=\{2, 4, 4.3\}$, $\bm{r}_2=\{0, 3, 3.6\}$, $\bm{r}_3=\{0.5, 4.1, 4.4\}$. We initialize $\hat{\bm{\mu}}^{0}$ with $\bm{r}_1=\{2,4,4.3\}$, i.e., $\hat{\mu}^{0}_1 =2$, $\hat{\mu}^{0}_2 =4$ and $\hat{\mu}^{0}_3 =4.3$. $K^{1}_1$ is clear and for $K^{1}_2$, according to theorem 2, the optimal matching must be one of the following three cases in a rotation manner: (a) $(0 \to 2), (3 \to 4),(3.6 \to 4.3)$; (b) $(0 \to 4.3), (3 \to 2),(3.6 \to 4)$ and (c) $(0 \to 4), (3 \to 4.3),(3.6 \to 2)$. Clearly, (b) minimizes (\ref{circle1}). Similarly, we find that $(0.5 \to 2), (4.1 \to 4),(4.4 \to 4.3)$ is optimal for $K^1_3$. Given $\bm{K^{1}}_{[1:L]}$, we proceed to estimate $\bm{\mu^{1}}_{[1:N]}$. Here we only take $\mu^{1}_{1}$ as an example. With $\bm{K^{1}}_{[1:L]}$, $\{2,3,0.5\}$ are grouped together. $\mu^{1}_{1}$ must be one of the three $\{ \frac{0.5+2+3}{3}, \frac{0.5+2+3+5}{3}, \frac{0.5+2+3+5\times2}{3} \}$ and we find that $\mu^{1}_{1}=\frac{5.5}{3}$ minimizes (\ref{estmu}).
\end{example}

\section{Robustness Strengthening and Simulation}
\noindent In this section, we will introduce error correcting codes to further strengthen the robustness of proposed statistical RCRTs. As we stressed earlier, to find correct $\bm{K}_{[1:L]}$ plays the key role in reconstruction. Even if only one residue is not correctly clustered, it may compromise estimation performance heavily in CRT systems. 
A natural question is that when perfect residue clustering is not achievable, whether robust reconstruction is still possible. Fortunately, a previous work \cite{tvt-2019} has provided a positive answer to this question, as it will implement robust reconstruction under few residues with arbitrary errors. Assume that $L$ moduli are used, where $\{m_l=\Gamma M_l\}$ are in an ascending order, and $L_0$ is the smallest positive integer $L_0 \leq L$, such that $lcm(m_1,m_2,...,m_{L_0}) = \Gamma \prod_{l=1}^{L_0} M_l> D$. Said another way, if no errors exist in residues, $Y_i$ can be sufficiently recovered from the residues of any $L_0$ moduli. By taking a wrong classification with an arbitrary error happening to that residue, from \cite{tvt-2019}, we have the following theorem:

\begin{thm}[ \cite{tvt-2019} ]
\label{tvt}
Given $\bm{K}_{[1:L]} $, where at least ($L-\lfloor \frac{L-L_0}{2} \rfloor$) residues $\{R_{i[1:L]}\}$ of $Y_i$ are correctly clustered and $\max_l \Delta_{il} - \min_l  \Delta_{il} < \frac{\Gamma}{2}$ hold for those correctly clustered residues $\{R_{i[1:L]}\}$ for each $i$, then there exists a robust reconstruction scheme for $\hat{Y}_i$ such that $|\hat{Y}_i - Y_i| \leq \frac{3\Gamma}{4}$.
\end{thm}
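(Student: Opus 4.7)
The plan is to treat this as an error-correcting decoding problem over the residue number system, exploiting the fact that $L_0$ is exactly the reconstruction threshold of RCRT for a single number. By hypothesis at most $e := \lfloor (L-L_0)/2 \rfloor$ of the $L$ residues clustered to $Y_i$ are wrongly attributed, while the remaining $L - e \geq L_0 + e$ residues satisfy the max--min noise bound $\max_l \Delta_{il} - \min_l \Delta_{il} < \Gamma/2$. The redundancy $L - L_0 = 2e$ is the classical margin needed for unique decoding from $e$ residue errors, so the theorem is really saying that the single-number RCRT of \cite{closed} lifts to a list-decoding-style procedure.

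The first step is to enumerate every size-$L_0$ subset $S$ of the $L$ clustered residues and apply the closed-form single-number RCRT of \cite{closed} to produce a candidate estimate $\hat{Y}_i^{(S)}$. When $S$ consists entirely of correctly clustered residues, that result guarantees $|\hat{Y}_i^{(S)} - Y_i| \leq \Gamma/4$, since on such a subset the max--min gap hypothesis coincides with Wang--Xia's robustness condition. Hence at least $\binom{L-e}{L_0}$ of the candidate estimates cluster inside an interval of width $\Gamma/2$ around $Y_i$, giving a natural majority signal.

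The second step is a voting / densest-cluster argument. I would build a graph on the candidate estimates with edges between pairs within distance $\Gamma/2$ and extract the largest clique. The clean-subset candidates form such a clique of size at least $\binom{L-e}{L_0}$, so it suffices to show that no spurious clique can be strictly larger. Any spurious clique aggregated around a wrong value $Y^\ast$ must consist of subsets each containing at least one of the at most $e$ misclassified residues; a counting check using $L - e \geq L_0 + e$ gives $\binom{L-e}{L_0} > \binom{L}{L_0} - \binom{L-e}{L_0}$, so the correct clique dominates and its representative $\tilde{Y}_i$ lies within $\Gamma/4$ of $Y_i$. Refining $\tilde{Y}_i$ using the surviving correctly clustered residues can at worst add an offset of $\Gamma/2$, because a borderline residue near the edge of its noise interval may flip the associated quotient $\lfloor Y_i/\Gamma\rfloor$ by one unit; adding $\Gamma/4$ and $\Gamma/2$ yields the stated bound $|\hat{Y}_i - Y_i| \leq 3\Gamma/4$.

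The main obstacle I anticipate is the counting step in the voting argument: showing rigorously that no spurious clique can outnumber the correct one. The naive bound on how many mixed good/bad subsets could coincidentally produce a single spurious value is loose, and pushing it to the sharp threshold $e \leq \lfloor(L-L_0)/2\rfloor$ requires a Reed--Solomon-style uniqueness argument -- essentially showing that two distinct values $Y_i$ and $Y^\ast$ consistent with more than $L - e$ residues each would force two RCRT solutions on some shared $L_0$-subset, contradicting the uniqueness of CRT on any such subset. The looser $3\Gamma/4$ (rather than $\Gamma/4$) deviation bound in the conclusion is precisely what absorbs the single possible quotient flip in the refinement step, so tracking where this looseness enters is the delicate bookkeeping I would need to get right.
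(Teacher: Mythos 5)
First, note that the paper does not prove this statement at all: Theorem~\ref{tvt} is imported verbatim from \cite{tvt-2019} and used as a black box, so there is no in-paper proof to compare against. Your proposal therefore has to stand on its own, and as written it has a genuine gap at exactly the point you flag as the ``main obstacle.'' The counting inequality you rely on, $\binom{L-e}{L_0} > \binom{L}{L_0} - \binom{L-e}{L_0}$, i.e.\ $2\binom{L-e}{L_0} > \binom{L}{L_0}$, is simply false in general: for $L=6$, $L_0=2$, $e=\lfloor(L-L_0)/2\rfloor=2$ one gets $2\binom{4}{2}=12 < 15=\binom{6}{2}$, and even the smallest nontrivial case $L=4$, $L_0=2$, $e=1$ gives only equality ($6=6$). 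So raw majority voting over all $L_0$-subsets does not certify the correct cluster, and the ``densest clique'' step of your argument collapses.

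The repair is the Reed--Solomon-style uniqueness argument you mention but do not carry out, and it should replace (not supplement) the counting. Concretely: a spurious value $Y^{\ast}$ whose clique of $L_0$-subsets outnumbers the clean clique must be consistent (within the noise tolerance) with at least $L-e+1$ of the $L$ residues, since a clique built from $t$ consistent residues has size at most $\binom{t}{L_0}$ and $\binom{t}{L_0}>\binom{L-e}{L_0}$ forces $t\ge L-e+1$. Then $Y^{\ast}$ and $Y_i$ are each consistent with at least $L-e$ residues, hence agree on at least $(L-e)+(L-e+1)-L = L-2e+1 \ge L_0+1$ moduli; since any $L_0$ of the moduli already determine a number in $[0,D)$ uniquely, $Y^{\ast}$ must coincide with $Y_i$ up to the noise tolerance. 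This is the standard minimum-distance $L-L_0+1$ decoding argument for residue codes, and it is what makes the threshold $e=\lfloor(L-L_0)/2\rfloor$ sharp. Your remaining bookkeeping (that a clean $L_0$-subset under the max--min hypothesis $<\Gamma/2$ recovers the quotient up to a possible unit flip, and that this is what inflates the final bound from $\Gamma/4$ to $3\Gamma/4$) is plausible and consistent with how the paper uses the relaxed condition in Remark 2, but as presented the decoding core of your proof rests on an arithmetic claim that fails.
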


\begin{remark}
It is worthy mentioning that with fixed $L_0$, increasing $L$, i.e., with more moduli (samplers), does not guarantee to continuously improve the performance of reconstruction since a larger $L$ always degrades the clustering accuracy. Moreover, besides the threshold ($L-\lfloor \frac{L-L_0}{2} \rfloor$) requirement of $\bm{K}_{[1:L]}$,  Theorem \ref{tvt} also assumes that $\max_l \Delta_{il} - \min_l  \Delta_{il} < \frac{\Gamma}{2}$ for correctly clustered residues, which fails in a higher probability with a larger $L$ as analyzed in Section II. 
\end{remark}


In general, the reconstruction performance depends on factors including $N$, $L$, $L_0$, $\Gamma$, the noise and also the desirable computation power, of which the relationships are too complicated to be concisely expressed. However, if the noise is limited, adding redundant residues properly can always improve the performance. Besides, the other advantages of the error correction mechanism will be clear for the following majority voting based estimations.

In the rest of the section, we will show the mechanism to fully utilize the samples from multiple samplers. A natural idea is to regroup the moduli into different sets. Then, we use the residues from each set to implement Algorithm 1 or 2. Roughly speaking, the basic requirement is that each set should include at least $L_0$ moduli, of which the lcm is bigger than $D$ in order to achieve a valid reconstruction. Thus, we can obtain several estimated $\{\hat{Y}_i\}$ from different sets. If there are $\kappa$ such moduli sets, we can then correspondingly pick the $N$ most frequent numbers from all $\kappa N$ reconstructed numbers as the output. 
\begin{figure*}
\label{simu}

\centering
\subfigure[Proposed Statistical RCRT-1 and Deterministic RCRT \cite{TSP2018}]{

\begin{minipage}[t]{0.48\textwidth}
\centering
\centerline{\includegraphics[width=90mm]{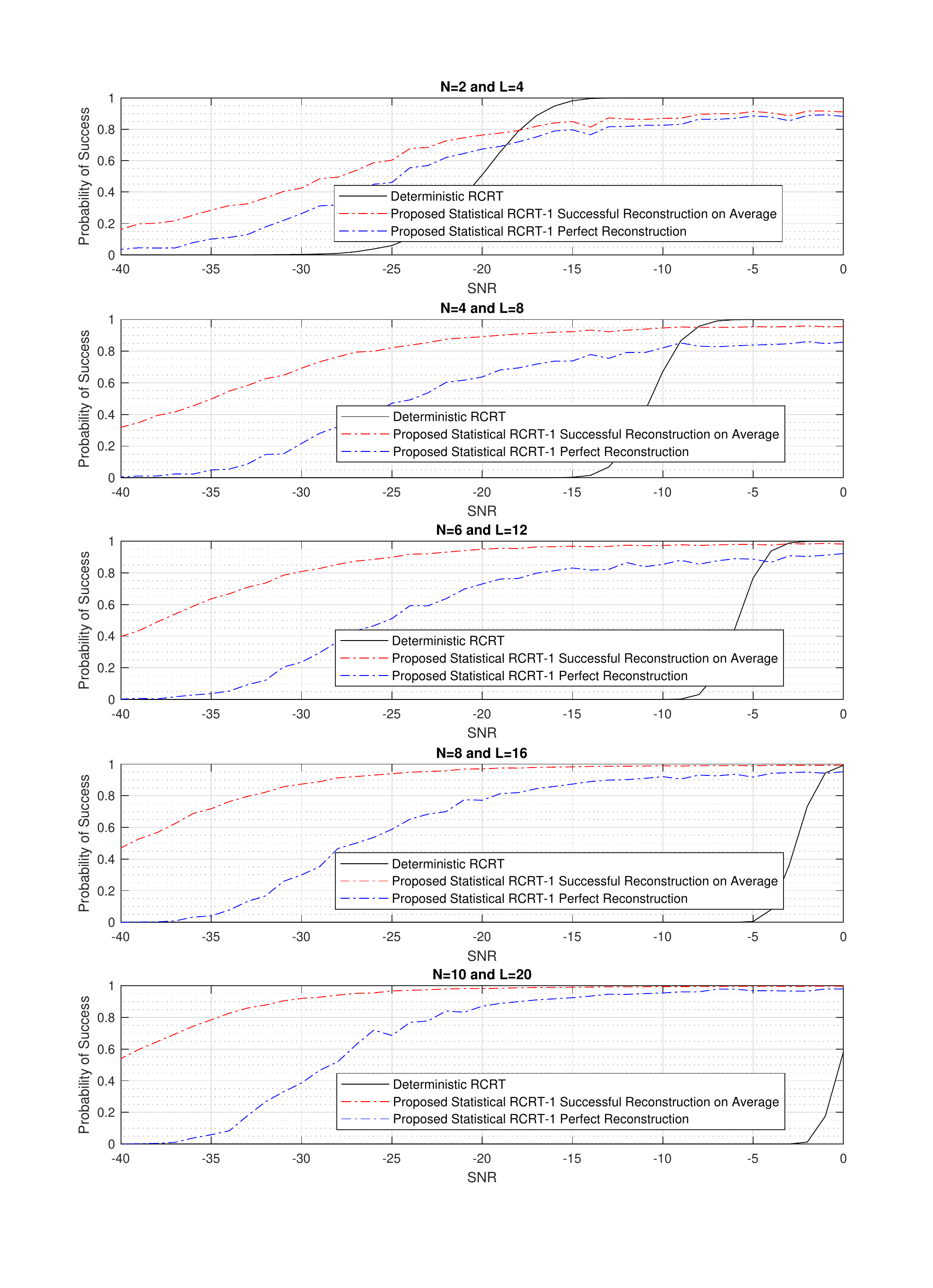}}
\end{minipage}
}
\subfigure[Proposed Statistical RCRT-2 and Deterministic RCRT \cite{TSP2018}]{
\begin{minipage}[t]{0.48\textwidth}
\centering
\centerline{\includegraphics[width=90mm]{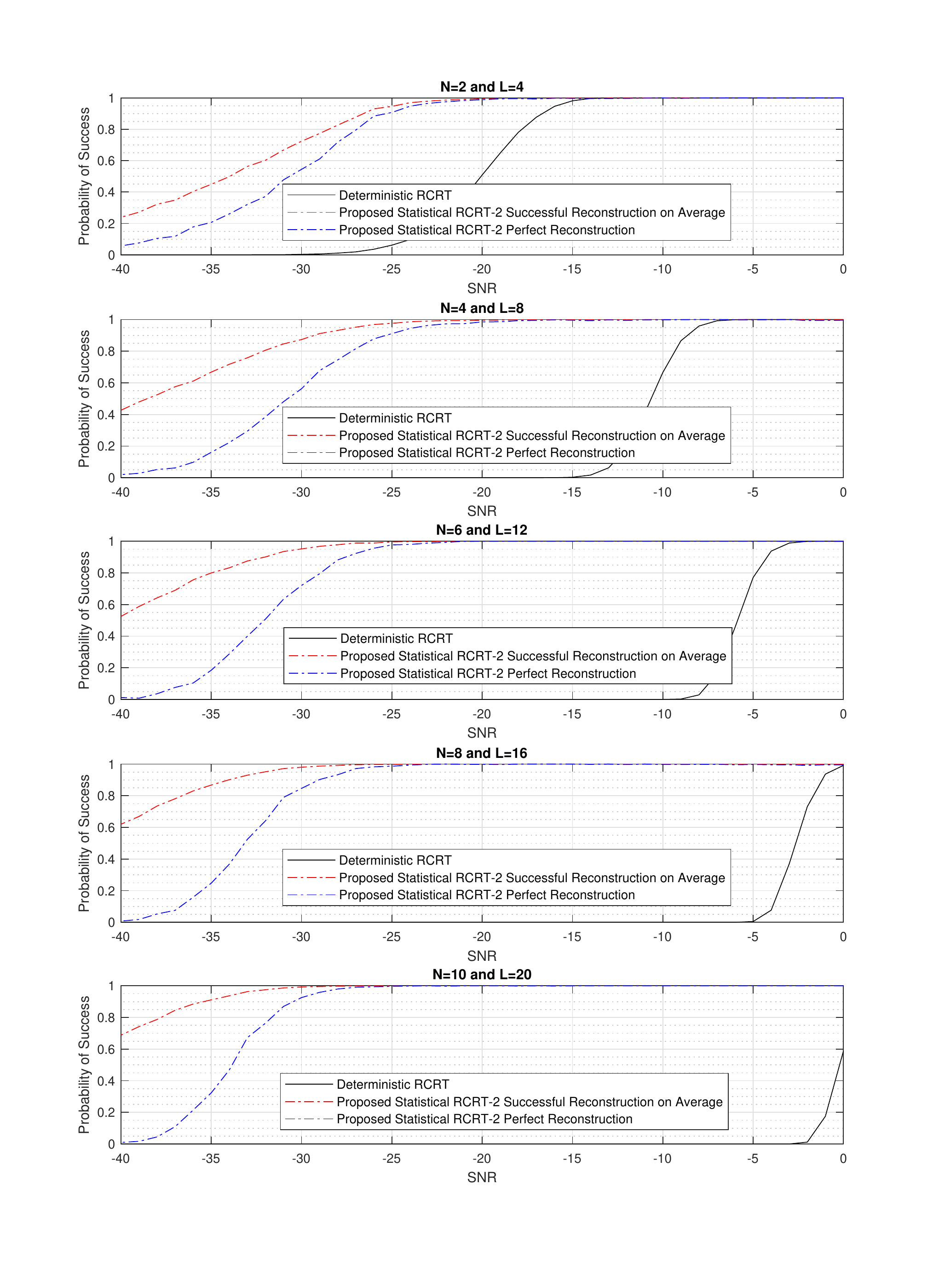}}
\end{minipage}
}
\caption{Comparison for Success Rate of Robust Reconstruction between the proposed two statistical RCRT and Deterministic RCRT in \cite{TSP2018}}
\end{figure*}
However, even if both correct residue clusterings are obtained with two different sets, noise may still cause a slight difference between two estimated $\hat{Y}_i$ of the same $Y_i$, where the above idea can not be used straightforwardly. Instead, in our scheme, we focus on the $N$ most frequent quotients. \footnote{Even under correct residue classification, the common residues may be uniformly shifted by $\Gamma$ with the proposed scheme, depending on the choice of the cutting point $\tau$. Correspondingly, the difference of two reconstructions of the quotient $\lfloor \frac{Y_i}{\Gamma} \rfloor$ may be $1$. However, this can be easily distinguished and here we assume they share the same estimated quotient.}



In the simulation, we compare proposed algorithms with the deterministic RCRT for multiple numbers in \cite{TSP2018}. 
The following simulation results show the comparison among Statistical RCRT-1, short for Algorithm 1 (MAP of residue clustering), Statistical RCRT-2, short for Algorithm 2 (MAP of both residue clustering and common residue estimation), and deterministic one \cite{TSP2018}. Here, we set $L_0=2$. Moduli are selected in a form where $\Gamma=100$ and $\{M_1, M_2, ... ,M_L\}$ are the sequence of primes starting from 23. Accordingly, $\bm{Y}_{[1:L]}$ are randomly selected from $[0,66700]$, where $D=100 \times 23 \times 29$. Referring to the requirements of moduli in \cite{TSP2018}, the lcm of all moduli is larger than the product of $\{Y_i\}$ and we set $L=L_0N=2N$. We assume the variance of noise $\sigma^2_l=\sigma^2 = 10^{-SNR/10}$ for each $l$. As for the simulation shown in Fig. 5, for both Algorithm 1 \& 2, we utilize residues from each pair of moduli, in total $\binom{2N}{2}$ many groups for the simulations.

We define that $Y_i$ is {\em robustly reconstructed}  if the reconstruction error is upper bounded by $\Gamma$. Since the reconstruction of $N$ numbers is finally converted to $N$ independent reconstruction processes for each $Y_i$ in both proposed statistical RCRTs, we define the {\em success rate on average} as the expected rate that a $Y_i$ can be robustly recovered. Similarly, the {\em perfect reconstruction rate} is the probability that all $\{Y_i\}$ are robustly recovered. The two metrics may be of different interest in different applications. SNR is limited within $[-40, 0]$ and $N \in \{2, 4, 6, 8,10\}$. We run 1000 simulations to estimate the success rate in each scenario.

The simulation results for both Algorithm 1 and Algorithm 2 compared to deterministic RCRT in \cite{TSP2018} are presented in Fig. 5 (a) and (b), respectively. From Fig. 5, Algorithm 2 outperforms Algorithm 1 as analyzed before, since Algorithm 2 is in an iterative manner, which may face a little more computational overhead. Heuristically, due to the non-weighted nature of CRT,  if errors happen to both residue clusterings in two different moduli sets, the resultant $\hat{Y}_i$ and quotients $\lfloor \frac{\hat{Y}_i}{\Gamma} \rfloor$ associated will be dramatically different in very high probability. Therefore, when estimated $\lfloor \frac{\hat{Y}_i}{\Gamma} \rfloor$ has been reconstructed at least twice across the sets, it is of high confidence to be selected in the majority voting. Our simulations coincide with such intuition.

The distributions of the iteration number for different $N$ in Algorithm 2 are shown in the top subfigure of Fig. \ref{MLEsim}. Generally, Algorithm 2 can reach a stationary state within $10$ iterations, mostly concentrated between $2$ and $3$. To be more detailed, the other successive five subfigures in Fig. \ref{MLEsim} show how the noise level influences the number of iterations. Here 'low SNR' stands for the cases where SNR is within $[-40,-20)$ and 'high SNR' refers to SNR within $[-20,0]$. Clearly, low SNR may incur a higher complexity, while the number of iterations in high SNR cases is within 3 rounds on average.

Finally, we give two examples to show how error correcting techniques can improve the performance, where $N$ is set to $2$ and $6$, respectively. Given $L=4$, we compare the performances of Algorithm 2 with and without incorporating error correction \footnote{From Theorem \ref{tvt}, here we can tolerate at most one clustering error since $L_0=2$ and $\lfloor \frac{L-L_0}{2} \rfloor =1$.}, which is shown in Fig. \ref{error-correction}. It is noted that all the four moduli need to be simultaneously used in Algorithm 2 to apply error correction in reconstruction; while without error correction, the moduli are regrouped into $\binom{4}{2}$, i.e., 6, sets and the original Algorithm 2 is applied on each set with majority voting strategy afterwards. In the case with $N=2$ (the upper one in Fig. \ref{error-correction}), error correction does not provide a better tradeoff. That is because when $N$ is small, clusterings produced are accurate enough. In such scenario, it is more reasonable to generate more modulus sets for the reconstructions of numbers respectively to the majority voting. However, as $N$ increases, clustering errors happen in a sharply increasing rate, where merely a larger number of reconstructions from different sets do not benefit the success rate of majority voting so much. The lower one in Fig. \ref{error-correction} shows that given $N=6$, the error correction based Statistical RCRT-2 outperforms the original one when SNR is bigger than -37.5.



\begin{figure}
\label{iteration}
	\vskip 0.2in
	\begin{center}
		\centerline{\includegraphics[width=90mm]{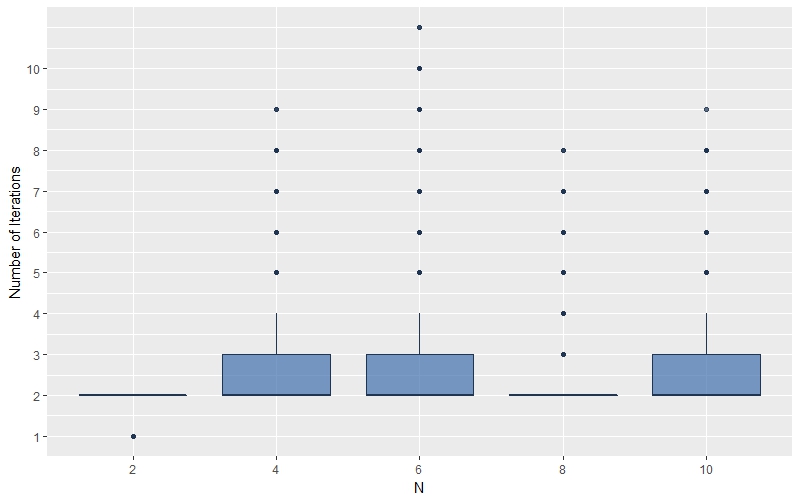}}
		\centerline{\includegraphics[width=90mm]{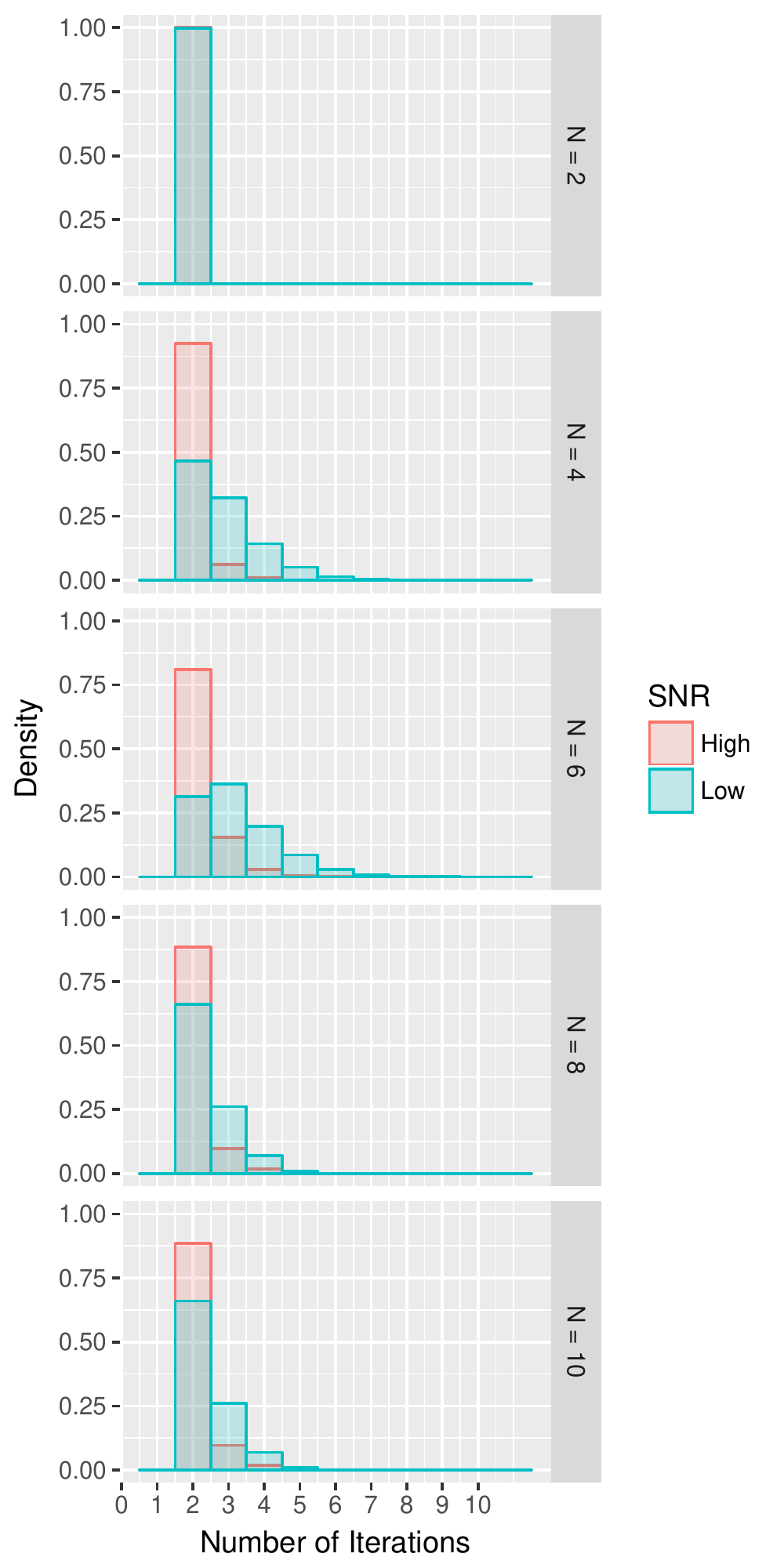}}
		\caption{The Number of Average Iterations of Algorithm 2}
		\label{MLEsim}
	\end{center}
	\vskip -0.2in
\end{figure}

\begin{figure}
\label{thres}
	\vskip 0.2in
	\begin{center}
		\centerline{\includegraphics[width=90mm]{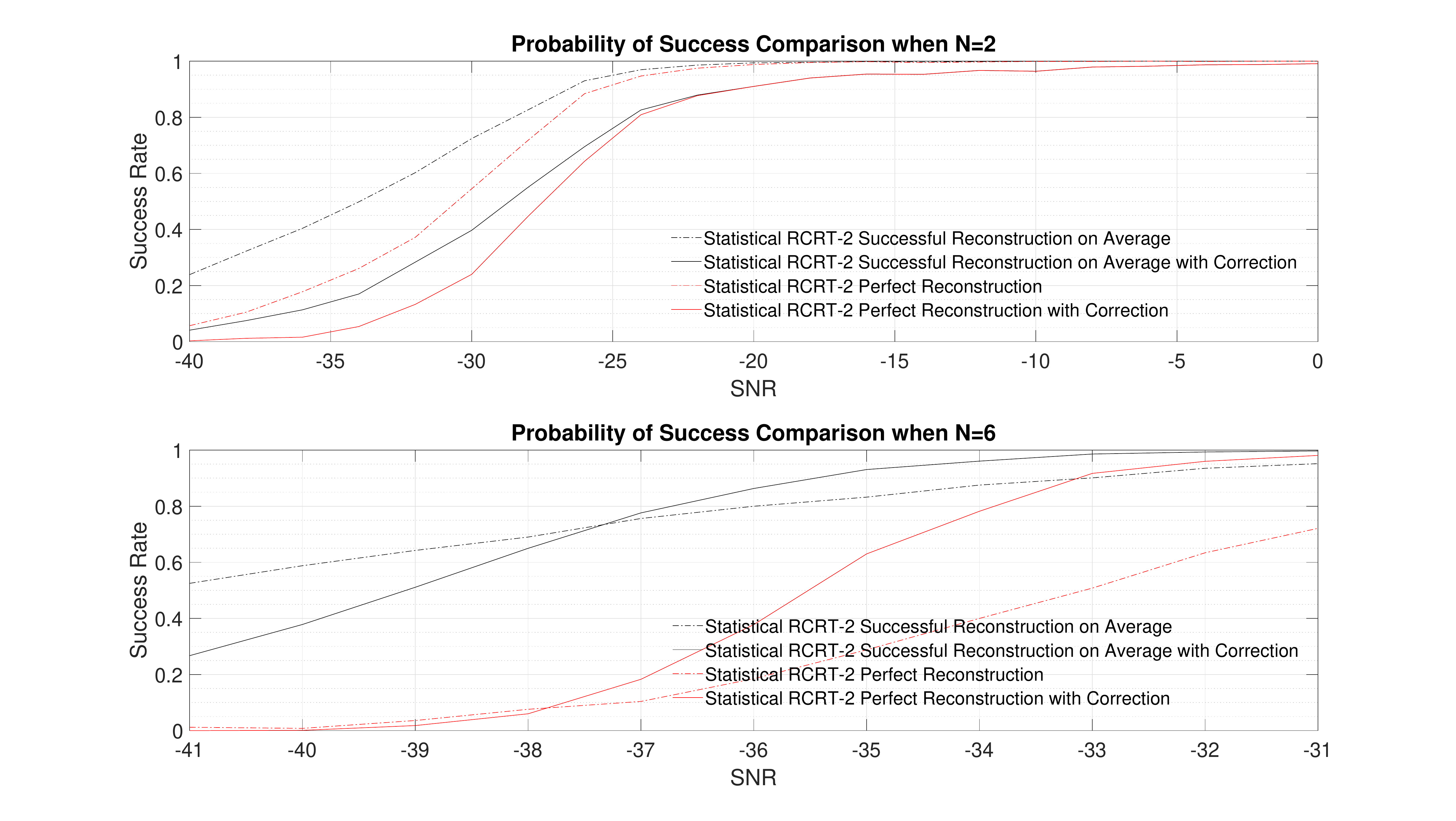}}
		\caption{ Proposed Statistical RCRT-2 with Error Correction }
		\label{error-correction}
	\end{center}
	\vskip -0.2in
\end{figure}

\section{Conclusion and Prospects}
\noindent In this paper, we present the first statistical based approaches to efficiently solve the robust reconstruction of multiple numbers from unordered residues. Compared with deterministic schemes, the proposed statistical RCRT methods significantly improve the performance of reconstruction, which can be further strengthened with error correcting techniques. However, in extremely low noise cases, the performance of deterministic schemes can be better especially when $\Gamma$ is small. Therefore, it would be of great interest to investigate the tradeoff between statistical inference and deterministic error tolerance.

Another problem that remains open is how to determine the optimal size of modulus set for reconstruction. We believe it is nontrivial to describe the tradeoff between the clustering accuracy compromise and the additional robustness gained from redundant moduli for error correction.

\bibliographystyle{plain}
\bibliography{ref}

\end{document}